\documentclass[pra,aps,onecolumn,twoside,superscriptaddress,letterpaper]{revtex4}

\usepackage{amsmath,amsfonts,amssymb,color,epsfig,graphics,graphicx,hyperref,latexsym,mathrsfs,revsymb,theorem,url,verbatim,epstopdf,mathtools,enumitem,tikz}
\usetikzlibrary{matrix}
\usetikzlibrary{decorations.pathreplacing}
\hypersetup{colorlinks,linkcolor={blue},citecolor={blue},urlcolor={red}}

\newtheorem{definition}{Definition}
\newtheorem{proposition}[definition]{Proposition}
\newtheorem{lemma}[definition]{Lemma}

\newtheorem{theorem}[definition]{Theorem}
\newtheorem{corollary}[definition]{Corollary}
\newtheorem{conjecture}[definition]{Conjecture}

\newtheorem{remark}[definition]{Remark}
\newtheorem{example}[definition]{Example}
\newtheorem{question}[definition]{Question}
\newtheorem{memo}[definition]{Memo}

\def\squareforqed{\hbox{\rlap{$\sqcap$}$\sqcup$}}
\def\qed{\ifmmode\squareforqed\else{\unskip\nobreak\hfil
\penalty50\hskip1em\null\nobreak\hfil\squareforqed
\parfillskip=0pt\finalhyphendemerits=0\endgraf}\fi}
\def\endenv{\ifmmode\;\else{\unskip\nobreak\hfil
\penalty50\hskip1em\null\nobreak\hfil\;
\parfillskip=0pt\finalhyphendemerits=0\endgraf}\fi}
\newenvironment{proof}{\noindent \textbf{{Proof.~} }}{\qed}
\def\Dbar{\leavevmode\lower.6ex\hbox to 0pt
{\hskip-.23ex\accent"16\hss}D}
\makeatletter
\def\url@leostyle{%
  \@ifundefined{selectfont}{\def\UrlFont{\sf}}{\def\UrlFont{\small\ttfamily}}}
\makeatother
\usepackage{subfigure}

\def\bcj{\begin{conjecture}}
\def\ecj{\end{conjecture}}
\def\bcr{\begin{corollary}}
\def\ecr{\end{corollary}}
\def\bd{\begin{definition}}
\def\ed{\end{definition}}
\def\bea{\begin{eqnarray}}
\def\eea{\end{eqnarray}}
\def\beq{\begin{equation}}
\def\eeq{\end{equation}}
\def\bal{\begin{aligned}}
\def\eal{\end{aligned}}
\def\bem{\begin{enumerate}}
\def\eem{\end{enumerate}}
\def\bex{\begin{example}}
\def\eex{\end{example}}
\def\bim{\begin{itemize}}
\def\eim{\end{itemize}}
\def\bl{\begin{lemma}}
\def\el{\end{lemma}}
\def\bma{\begin{bmatrix}}
\def\ema{\end{bmatrix}}
\def\bpf{\begin{proof}}
\def\epf{\end{proof}}
\def\bpp{\begin{proposition}}
\def\epp{\end{proposition}}
\def\bqu{\begin{question}}
\def\equ{\end{question}}
\def\br{\begin{remark}}
\def\er{\end{remark}}
\def\bt{\begin{theorem}}
\def\et{\end{theorem}}
\def\bmm{\begin{memo}}
\def\emm{\end{memo}}

\def\btb{\begin{tabular}}
\def\etb{\end{tabular}}

\newcommand{\nc}{\newcommand}

\nc{\bbA}{\mathbb{A}} \nc{\bbB}{\mathbb{B}} \nc{\bbC}{\mathbb{C}}
 \nc{\bbD}{\mathbb{D}} \nc{\bbE}{\mathbb{E}} \nc{\bbF}{\mathbb{F}}
 \nc{\bbG}{\mathbb{G}} \nc{\bbH}{\mathbb{H}} \nc{\bbI}{\mathbb{I}}
 \nc{\bbJ}{\mathbb{J}} \nc{\bbK}{\mathbb{K}} \nc{\bbL}{\mathbb{L}}
 \nc{\bbM}{\mathbb{M}} \nc{\bbN}{\mathbb{N}} \nc{\bbO}{\mathbb{O}}
 \nc{\bbP}{\mathbb{P}} \nc{\bbQ}{\mathbb{Q}} \nc{\bbR}{\mathbb{R}}
 \nc{\bbS}{\mathbb{S}} \nc{\bbT}{\mathbb{T}} \nc{\bbU}{\mathbb{U}}
 \nc{\bbV}{\mathbb{V}} \nc{\bbW}{\mathbb{W}} \nc{\bbX}{\mathbb{X}}
 \nc{\bbZ}{\mathbb{Z}}

 \nc{\bA}{{\bf A}} \nc{\bB}{{\bf B}} \nc{\bC}{{\bf C}}
 \nc{\bD}{{\bf D}} \nc{\bE}{{\bf E}} \nc{\bF}{{\bf F}}
 \nc{\bG}{{\bf G}} \nc{\bH}{{\bf H}} \nc{\bI}{{\bf I}}
 \nc{\bJ}{{\bf J}} \nc{\bK}{{\bf K}} \nc{\bL}{{\bf L}}
 \nc{\bM}{{\bf M}} \nc{\bN}{{\bf N}} \nc{\bO}{{\bf O}}
 \nc{\bP}{{\bf P}} \nc{\bQ}{{\bf Q}} \nc{\bR}{{\bf R}}
 \nc{\bS}{{\bf S}} \nc{\bT}{{\bf T}} \nc{\bU}{{\bf U}}
 \nc{\bV}{{\bf V}} \nc{\bW}{{\bf W}} \nc{\bX}{{\bf X}}
 \nc{\bZ}{{\bf Z}}

\nc{\cA}{{\cal A}} \nc{\cB}{{\cal B}} \nc{\cC}{{\cal C}}
\nc{\cD}{{\cal D}} \nc{\cE}{{\cal E}} \nc{\cF}{{\cal F}}
\nc{\cG}{{\cal G}} \nc{\cH}{{\cal H}} \nc{\cI}{{\cal I}}
\nc{\cJ}{{\cal J}} \nc{\cK}{{\cal K}} \nc{\cL}{{\cal L}}
\nc{\cM}{{\cal M}} \nc{\cN}{{\cal N}} \nc{\cO}{{\cal O}}
\nc{\cP}{{\cal P}} \nc{\cQ}{{\cal Q}} \nc{\cR}{{\cal R}}
\nc{\cS}{{\cal S}} \nc{\cT}{{\cal T}} \nc{\cU}{{\cal U}}
\nc{\cV}{{\cal V}} \nc{\cW}{{\cal W}} \nc{\cX}{{\cal X}}
\nc{\cZ}{{\cal Z}}

\nc{\hA}{{\hat{A}}} \nc{\hB}{{\hat{B}}} \nc{\hC}{{\hat{C}}}
\nc{\hD}{{\hat{D}}} \nc{\hE}{{\hat{E}}} \nc{\hF}{{\hat{F}}}
\nc{\hG}{{\hat{G}}} \nc{\hH}{{\hat{H}}} \nc{\hI}{{\hat{I}}}
\nc{\hJ}{{\hat{J}}} \nc{\hK}{{\hat{K}}} \nc{\hL}{{\hat{L}}}
\nc{\hM}{{\hat{M}}} \nc{\hN}{{\hat{N}}} \nc{\hO}{{\hat{O}}}
\nc{\hP}{{\hat{P}}} \nc{\hR}{{\hat{R}}} \nc{\hS}{{\hat{S}}}
\nc{\hT}{{\hat{T}}} \nc{\hU}{{\hat{U}}} \nc{\hV}{{\hat{V}}}
\nc{\hW}{{\hat{W}}} \nc{\hX}{{\hat{X}}} \nc{\hZ}{{\hat{Z}}}

\nc{\hn}{{\hat{n}}}

\def\dim{\mathop{\rm Dim}}

\newcommand{\Tr}{\operatorname{Tr}}
\newcommand{\PPT}{\mathcal{PPT}}
\newcommand{\SEP}{\mathcal{SEP}}

\def\max{\mathop{\rm max}}
\def\min{\mathop{\rm min}}

\newcommand{\bra}[1]{\langle#1|}
\newcommand{\ket}[1]{|#1\rangle}

\usepackage{hyperref}
\hypersetup{colorlinks,linkcolor={blue},citecolor={blue},urlcolor={red}}
\usepackage{amsmath,amssymb}

\begin{document}

\title{Entanglement Quantification via Symmetric Extensions: A Resource Theory Hierarchy}

\author{Enmin Shao}\email[]{by2509115@buaa.edu.cn}
\affiliation{LMIB(Beihang University), Ministry of education, and School of Mathematical Sciences, Beihang University, Beijing 100191, China}

\author{Lin Chen}\email[]{linchen@buaa.edu.cn (corresponding author)}
\affiliation{LMIB(Beihang University), Ministry of education, and School of Mathematical Sciences, Beihang University, Beijing 100191, China}

\author{Huixia He}\email[]{hehx@buaa.edu.cn (corresponding author)}
\affiliation{LMIB(Beihang University), Ministry of education, and School of Mathematical Sciences, Beihang University, Beijing 100191, China}
\date{\today}
\begin{abstract}

We introduce a hierarchy of entanglement measures $E_k$ based on $k$-symmetric PPT extensions. Each $E_k$, defined via a minimal eigenvalue shift and computed by semidefinite programming, is faithful, convex, and monotone under free operations. The hierarchy strictly refines PPT-robustness at $k=1$, detects bound entanglement at $k=2$, and converges exactly to the separability measure as $k\to\infty$. Numerical experiments on Horodecki, Werner, UPB, and random states demonstrate practical scalability. Our framework unifies computational efficiency with operational fidelity in a single tunable family—a combination previously believed to be fundamentally incompatible in entanglement quantification. It supplies, for the first time, a systematically improvable resource-theoretic yardstick that accounts for all entangled states, including the bound entangled ones that have long resisted quantitative treatment.
\end{abstract}
\maketitle

\section{Introduction}

Quantum entanglement, a hallmark of non-classical correlations first recognized in the Einstein–Podolsky–Rosen argument, has evolved into a central resource for quantum information processing, enabling tasks such as teleportation, secure key distribution, and measurement‑based computation \cite{chitambar2019, coecke2016, horodecki2009}. Within the modern resource‑theoretic framework \cite{chitambar2019, coecke2016, brandaogour2015}, quantifying entanglement is a prerequisite for assessing its value in practical protocols. A large variety of entanglement measures have been proposed, including distillable entanglement, entanglement cost, entanglement of formation, the relative entropy of entanglement, and various robustness measures \cite{bennett1996a, wootters1998, vedral1997, plenio2007, vidal1999, steiner2003}. Despite their conceptual significance, most of these measures either involve asymptotic limits or require optimizations that are computationally intractable—it has been rigorously proven that evaluating them is NP‑hard in general \cite{gurvits2003, gharibian2010, huang2014}. This computational bottleneck has motivated the search for alternative measures that, while possibly providing only a partial order, can be evaluated efficiently, ideally via semidefinite programming (SDP) \cite{vandenberghe1996, skrzypczyk2023}.

One of the most successful examples in this direction is the ``robustness of entanglement'', and more specifically the PPT‑robustness introduced by Brandão \cite{brandao05}. This measure is defined as the minimal mixing with isotropic noise required to bring a state into the set of PPT (positive partial transpose) states. Because the PPT set admits an SDP representation, the PPT‑robustness can be computed efficiently and has been widely used as a benchmark. However, a well‑known limitation is that it assigns the value zero to all PPT states, including bound entangled states that contain genuine entanglement yet cannot be distilled \cite{horodecki97}. Hence the PPT‑robustness fails to capture a significant class of weakly entangled states, which motivates the development of finer resource quantifiers.

A separate landmark was the Doherty–Parrilo–Spedalieri (DPS) hierarchy \cite{doherty02, doherty04}, which constructs a sequence of outer approximations to the set of separable states by requiring the existence of $k$-symmetric extensions that remain PPT on all bipartitions. The hierarchy is complete in the sense that, as $k$ increases, the approximated set converges exactly to the separable set. The DPS construction has proven extremely useful for detecting bound entangled states \cite{doherty02, doherty04, navascues2008, piani2015} and has inspired analogous hierarchies for other quantum properties \cite{navascues2008, lasserre2001, christandl2007}. Nevertheless, the hierarchy has traditionally been employed as a binary criterion—a state either passes the $k$-symmetric PPT test or it does not—without yielding a genuine quantitative measure that reflects how far a state is from being $k$-extendible.

Recently, a different approach to quantifying unextendibility was proposed by Yao \textit{et al.} \cite{yao2025}, who introduced a virtual state extension cost and related it to the absolute robustness of $k$-unextendibility. While their work shares the general concept of employing extensions, the construction relies on a distinct set of operations and a different cost function. In contrast, the present paper builds a full‑fledged resource theory directly upon the physically transparent $k$-symmetric PPT extensions, yielding a family of SDP‑computable resource measures that naturally converge to the separability measure.

In this work we merge the resource‑theoretic perspective with the DPS hierarchy and propose a parameterized family of entanglement measures that strictly refine the PPT‑robustness. For each integer $k\ge 1$, we define the set of free states $\mathcal{F}_k$ as the states that admit a $k$-symmetric PPT extension in Definition~\ref{1}, and we identify the free operations $\mathcal{O}_k$ as the completely positive trace‑preserving maps that preserve $\mathcal{F}_k$ in Definition~\ref{2}. The central quantity, the $k$-entanglement measure $E_k$, is introduced in Definition~\ref{4} as the minimal eigenvalue shift required to embed an arbitrary state into $\mathcal{F}_k$.
Basic properties such as faithfulness in Lemma~\ref{5} and convexity in Lemma~\ref{6} follow directly from this definition.The technical cornerstone of our approach is the dual characterization established in Theorem~\ref{8}. This dual formulation is obtained via strong SDP duality and leads to two crucial consequences. First, Lemma~\ref{9} proves that $E_k$ is monotonically non‑increasing under any $k$-free operation, thereby satisfying the fundamental requirement for a resource measure. Second, Corollary~\ref{10} demonstrates that $E_k$ is the optimal quantifier among all measures derived from a fixed linear witness; no such witness can provide a stronger lower bound than $E_k$ can. When specialized to $k=1$, our construction reduces precisely to the standard PPT‑robustness, embedding it as the first level of a tower of progressively finer measures.Turning to the global hierarchy, Lemma~\ref{11} establishes that the measures form a non‑decreasing sequence $E_1(\rho)\le E_2(\rho)\le\cdots\le E_{\mathrm{SEP}}(\rho)$, where $E_{\mathrm{SEP}}$ is the analogous measure defined with respect to the full separable set. The key convergence result is given in Theorem~\ref{12}: $\lim_{k\to\infty}E_k(\rho)=E_{\mathrm{SEP}}(\rho)$ for every state $\rho$. This provides a quantitative interpretation of the completeness of the DPS hierarchy and guarantees that the limit faithfully captures the separability resource. Moreover, the hierarchy is strict: Theorem~\ref{13} exhibits, for every $k$, a bipartite state $\Gamma_k$ that belongs to $\mathcal{F}_k$ but not to $\mathcal{F}_{k+1}$, so $E_k(\Gamma_k)=0$ while $E_{k+1}(\Gamma_k)>0$. Hence no finite level can exhaust all entangled states, and each additional level adds genuinely new detection power. To make the measure practically computable, we reformulate the primal problem as an explicit SDP in Lemma~\ref{14} and provide a computationally efficient lower bound through the dual program in Corollary~\ref{15}. This dual bound enables one to certify the presence of non‑extendibility resources using only a feasible witness operator, without solving the full extension SDP. We have conducted extensive numerical experiments that validate the theoretical predictions and demonstrate the practical advantages of the hierarchy. On the one hand, the measure correctly reproduces the known analytical expressions for the two‑qubit Werner state. On the other hand, for the emblematic $3\otimes3$ Horodecki bound entangled state \cite{horodecki97} and the UPB state \cite{bennett1999}, we obtain $E_1=0$ while $E_2,E_3>0$, confirming that the measure already picks up bound entanglement at level $k=2$, which is invisible to any PPT‑based criterion. Similar results are obtained for high‑dimensional isotropic and Werner states, and for randomly generated $4\otimes4$ systems. These computations, carried out using symmetry reductions and modern SDP solvers, show that $E_k$ can be evaluated efficiently for states possessing algebraic symmetries even in dimensions as large as $10\otimes10$, while the dual formulation provides rapid lower bounds for larger systems where the full SDP becomes intractable.

The remainder of the paper is organized as follows. Section~\ref{sec2} collects the necessary notation and preliminaries on symmetric PPT extensions. Section~\ref{sec3} formalizes the resource theory, introduces the measure $E_k$, and proves the dual characterization, monotonicity, and optimality among fixed‑witness measures. Section~\ref{sec4} analyzes the hierarchy properties, including monotonicity, convergence, and strict separation. Section~\ref{sec5} presents the explicit SDP formulations and the dual lower bounds. Section~\ref{sec6} contains the numerical experiments. We discuss open questions and future directions in Section~\ref{sec7} and conclude in Section~\ref{sec8}.

\section{Preliminaries}
\label{sec2}

We consider a bipartite quantum system shared by two parties, Alice and Bob. The local Hilbert spaces are denoted by $\mathcal{H}_A$ and $\mathcal{H}_B$, and we write $d_A = \dim\mathcal{H}_A$, $d_B = \dim\mathcal{H}_B$ for their dimensions. The space of linear operators on a Hilbert space $\mathcal{H}$ is $\mathcal{B}(\mathcal{H})$. Let $\mathcal{D}_{AB}$ be the set of density operators (positive semidefinite operators with unit trace) on $\mathcal{H}_A\otimes\mathcal{H}_B$. For a bipartite state $\rho_{AB}\in\mathcal{D}_{AB}$, we denote by $\rho_{AB}^{T_B}$ its partial transpose with respect to a fixed orthonormal basis of $\mathcal{H}_B$. The set of positive-partial-transpose (PPT) states is
\[
\PPT\coloneqq\bigl\{\rho\in\mathcal{D}_{AB} : \rho^{T_B}\succeq 0\bigr\}.
\]
A state is called separable if it can be expressed as a convex combination of product states; the set of separable states is denoted by $\SEP$. It is well known that $\SEP\subset\PPT$ and that the inclusion is strict whenever $d_A d_B>6$.
The Doherty–Parrilo–Spedalieri (DPS) hierarchy~\cite{Doherty2002,Doherty2004} provides a sequence of outer approximations of the separable set. We now recall the necessary definitions. For an integer $k\ge 1$, we say that a state $\rho_{AB}$ admits a $k$-symmetric PPT extension if there exists a state $\tilde{\rho}$ on the composite Hilbert space
\[
\mathcal{H}_{A}\otimes \mathcal{H}_{B_1}\otimes \cdots \otimes \mathcal{H}_{B_k},
\]
where each $\mathcal{H}_{B_i}$ is a copy of Bob's Hilbert space $\mathcal{H}_B$. The subsystems $B_1,\dots,B_k$ (collectively referred to as the $B$-systems) represent the $k$ extended copies of Bob's local system. The extension $\tilde{\rho}$ must satisfy the following three conditions

\begin{enumerate}[label=(\roman*)]
  \item \emph{Symmetry.} For any permutation $\pi$ of the $k$ subsystems $B_1,\dots,B_k$, let $P_\pi$ be the corresponding permutation operator on $\mathcal{H}_B^{\otimes k}$. Then $(I_A\otimes P_\pi)\tilde\rho(I_A\otimes P_\pi)^\dagger=\tilde\rho$.
  
  \item \emph{PPT constraints.} For every subset $J\subseteq\{1,\dots,k\}$, the partial transpose of $\tilde\rho$ with respect to the $B$-systems in $J$ is positive semidefinite; i.e., $\tilde\rho^{T_{B_J}}\succeq 0$.
  \item \emph{Reduction.} Tracing out the extra $B$-systems recovers the original state: $\Tr_{B_2\cdots B_k}\tilde\rho = \rho_{AB}$.
\end{enumerate}
We define $S_k$ as the set of all bipartite states $\rho_{AB}$ that possess a $k$-symmetric PPT extension. By convention, $S_1=\PPT$. It is known that $S_{k+1}\subseteq S_k$ for every $k$, and that the hierarchy is complete in the sense that $\bigcap_{k=1}^{\infty}S_k=\SEP$~\cite{Doherty2004}.
A quantum operation is described by a completely positive and trace-preserving (CPTP) linear map $\Lambda:\mathcal{B}(\mathcal{H}_A\otimes\mathcal{H}_B)\to\mathcal{B}(\mathcal{H}_{A'}\otimes\mathcal{H}_{B'})$. For such a map, the \emph{adjoint} map $\Lambda^*:\mathcal{B}(\mathcal{H}_{A'}\otimes\mathcal{H}_{B'})\to\mathcal{B}(\mathcal{H}_A\otimes\mathcal{H}_B)$ is defined by the relation $\Tr[Y\Lambda(X)]=\Tr[\Lambda^*(Y)X]$ for all $X,Y$. The adjoint is automatically completely positive and unital; that is, $\Lambda^*(I)=I$. We will make frequent use of the fact that a CPTP map and its adjoint preserve the ordering of positive semidefinite operators: if $X\succeq Y$, then $\Lambda(X)\succeq\Lambda(Y)$ and $\Lambda^*(X)\succeq\Lambda^*(Y)$.
Finally, for a Hermitian operator $X$, we denote its maximal eigenvalue by $\lambda_{\max}(X)$. The operator norm is $\|X\|_\infty = \max\{\lambda_{\max}(X),-\lambda_{\min}(X)\}$, but in this work we will only need the one-sided version $\lambda_{\max}$.

\section{Resource Theory of k-Symmetric PPT Extensions}
\label{sec3}

In this section we formalize the resource theory that underpins the whole paper. 
We first define the free states $\mathcal{F}_k$ as those admitting a $k$-symmetric PPT extension. 
We then define the free operations $\mathcal{O}_k$ as the CPTP maps that preserve $\mathcal{F}_k$. 
Next, we introduce the one-parameter family of entanglement measures $E_k$. 
Each $E_k$ is given by the minimal eigenvalue shift needed to turn a given state into a free one. 
We prove that $E_k$ is faithful and convex. 
We also show that it admits an operational meaning as the bias in a hypothesis testing game between the state and the free set. 
To obtain stronger structural properties, we establish a dual characterization. 
Specifically, $E_k(\rho)$ equals the maximum expectation value over all block-positive witnesses bounded by the identity. 
Using this dual form, we prove two key properties. 
First, $E_k$ is monotonically non-increasing under any $k$-free operation. 
Second, it is the best possible quantifier among all single-witness measures. 
These results lay the foundation for the SDP formulation and the hierarchy analysis that follow.

\subsection{Free states, operations, and the basic quantifier}
\label{subsec:basicframework}

In this subsection we define the free states and free operations that constitute the resource theory at level $k$, and introduce the entanglement measure $E_k$ as the minimal eigenvalue excess required to embed a state into $\mathcal{F}_k$. We then establish two elementary properties---faithfulness and convexity---that follow directly from the definition and confirm that $E_k$ is a well-behaved candidate for a resource quantifier.

\begin{definition}
\label{1}
  The set of free states at level $k$ is
  \begin{equation}
  \label{Fk}
        \mathcal{F}_k\coloneqq S_k,
  \end{equation}

  i.e., the collection of all bipartite states that admit a $k$-symmetric PPT extension.
\end{definition}
Because $S_1=\PPT$ and $\bigcap_k \mathcal{F}_k = \SEP$, this family interpolates between the PPT resource theory and the separability resource theory.
\begin{definition}
\label{2}
  A CPTP map $\Lambda:\mathcal{B}(\mathcal{H}_A\otimes\mathcal{H}_B)\to\mathcal{B}(\mathcal{H}_{A'}\otimes\mathcal{H}_{B'})$ is called $k$-free if it preserves the free states
  \[
  \Lambda(\mathcal{F}_k)\subseteq\mathcal{F}_k.
  \]
  We denote the set of all $k$-free operations by $\mathcal{O}_k$.
\end{definition}
The class $\mathcal{O}_k$ contains the identity channel and is closed under composition; therefore it constitutes a well-defined set of operations for a resource theory. The most important physical operations that are free in every level are the local operations assisted by classical communication (LOCC).

With the free set fixed, a resource quantifier is defined as the smallest eigenvalue excess needed to embed a given state into the free set. Concretely, 
$E_k(\rho) = \min_{\sigma\in\mathcal{F}_k} \lambda_{\max}(\rho-\sigma)$, and this minimum is always attained. We introduce a quantifier of the resource relative to the free set $\mathcal{F}_k$ in \eqref{Fk}.
\begin{definition}\label{4}
For any bipartite state $\rho$, we define the \emph{$k$-entanglement measure}
\begin{equation}\label{eq:Ekdef}
E_k(\rho) := \min_{\sigma\in\mathcal{F}_k}\, \min\{\, t\ge 0 : \rho \le (1+t)\sigma \,\}.
\end{equation}
Because $\mathcal{F}_k$ is compact and the feasible set defined by the constraint $\rho \le (1+t)\sigma$ is closed, the minimum is attained.
Writing $\lambda = 1+t$, one obtains the equivalent expression
\[
E_k(\rho) = \min_{\sigma\in\mathcal{F}_k}\,  \min\{\, \lambda\ge 1 : \rho \le \lambda\sigma \,\} - 1.
\]
\end{definition}
We now establish some elementary properties that follow directly from the definition.

\begin{lemma}
\label{5}
For every state $\rho$, we have $E_k(\rho)\ge 0$, and $E_k(\rho)=0$ if and only if $\rho\in\mathcal{F}_k$, where  $\mathcal{F}_k$ is defined in \eqref{Fk}.
\end{lemma}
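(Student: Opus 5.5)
The plan is to argue directly from Definition~\ref{4}, treating nonnegativity, the easy direction, and the converse (faithfulness) in turn.

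\emph{Nonnegativity.} The inner minimization in \eqref{eq:Ekdef} ranges only over $t\ge 0$. So every feasible value is nonnegative, and therefore so is the minimum $E_k(\rho)$. For this to make sense, the feasible set must be nonempty for at least one $\sigma\in\mathcal{F}_k$. I will check this using the maximally mixed state $I/(d_Ad_B)$, which is separable and hence lies in $\mathcal{F}_k$ (its trivial extension $I/(d_Ad_B^k)$ is symmetric, PPT on every bipartition, and reduces correctly). Since $\rho\le I\le d_Ad_B\cdot I/(d_Ad_B)$, the choice $t=d_Ad_B-1$ is feasible.

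\emph{If $\rho\in\mathcal{F}_k$, then $E_k(\rho)=0$.} Take $\sigma=\rho$ and $t=0$. The constraint $\rho\le(1+0)\rho$ holds trivially, so the minimum is at most $0$. Combined with nonnegativity, this gives $E_k(\rho)=0$.

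\emph{If $E_k(\rho)=0$, then $\rho\in\mathcal{F}_k$.} This is the only step needing an argument. By the attainment statement in Definition~\ref{4}, which rests on compactness of $\mathcal{F}_k$ and closedness of the constraint, there is $\sigma\in\mathcal{F}_k$ with $\rho\le\sigma$. Then $\sigma-\rho\succeq 0$ and $\Tr(\sigma-\rho)=1-1=0$. A positive semidefinite operator with zero trace is zero, so $\rho=\sigma\in\mathcal{F}_k$.

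I expect the main obstacle to be justifying attainment rather than the trace argument. If attainment were in doubt, I would argue directly instead. Take a minimizing sequence $\sigma_n\in\mathcal{F}_k$ with $t_n\to 0$ and $\rho\le(1+t_n)\sigma_n$. Extract a convergent subsequence $\sigma_n\to\sigma$, using compactness of $\mathcal{F}_k$: it is the image under partial trace of the compact set of symmetric PPT extensions, which is closed and bounded. Passing to the limit in the closed cone condition gives $\rho\le\sigma$, and the trace argument above then finishes the proof.
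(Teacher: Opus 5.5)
Your proposal is correct and follows essentially the same route as the paper. Nonnegativity comes from $t\ge 0$, the easy direction uses $\sigma=\rho$, and the converse obtains $\sigma\in\mathcal{F}_k$ with $\rho\le\sigma$ by compactness and finishes with the equal-trace argument. The paper runs the converse as the minimizing-sequence argument you give as a backup rather than citing attainment, and it omits your small extra check that the feasible set is nonempty (via the maximally mixed state).
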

\begin{proof}
First, $E_k(\rho)\ge 0$ is immediate from the definition since the minimum is taken over $t\ge 0$. If $\rho\in\mathcal{F}_k$, then choosing $\sigma=\rho$ gives $\rho\le 1\cdot\rho$; hence $E_k(\rho)\le0$ and, because $t\ge0$, we obtain $E_k(\rho)=0$.
Conversely, suppose $E_k(\rho)=0$.  Then there exist sequences $\{t_n\}\downarrow0$ and $\{\sigma_n\}\subset\mathcal{F}_k$ such that
\[
\rho \le (1+t_n)\sigma_n \qquad \text{for all } n.
\]
By compactness of $\mathcal{F}_k$ we can extract a convergent subsequence $\sigma_{n_j}\to\sigma\in\mathcal{F}_k$.  Taking the limit $j\to\infty$ in the inequality yields $\rho\le\sigma$.  Since both $\rho$ and $\sigma$ are density operators, their traces are equal; hence $\rho=\sigma\in\mathcal{F}_k$.
\end{proof}

\begin{lemma}
\label{6}
The function $\rho\mapsto E_k(\rho)$ is convex on $\mathcal{D}_{AB}$.
\end{lemma}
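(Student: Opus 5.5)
The plan is to prove convexity directly from Definition~\ref{4}, using only that $\mathcal{F}_k$ is convex and that the operator inequality is preserved under positive combinations. Convexity of $\mathcal{F}_k=S_k$ is checked first. If $\sigma_1,\sigma_2\in S_k$ have $k$-symmetric PPT extensions $\tilde\sigma_1,\tilde\sigma_2$, then $p\tilde\sigma_1+(1-p)\tilde\sigma_2$ is an extension of $p\sigma_1+(1-p)\sigma_2$. Each of the three defining conditions is linear or conic: permutation invariance is linear, positivity of each partial transpose $\tilde\rho^{T_{B_J}}$ is preserved under convex combinations, and the partial trace is linear. So $S_k$ is convex.

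Next, fix states $\rho_1,\rho_2$ and $p\in[0,1]$. By the attainment of the minimum stated in Definition~\ref{4}, choose optimal $\sigma_i\in\mathcal{F}_k$ and $t_i=E_k(\rho_i)$ with $\rho_i\le(1+t_i)\sigma_i$. Set $\rho=p\rho_1+(1-p)\rho_2$ and $t=pt_1+(1-p)t_2$. Adding the two inequalities with weights gives
\[
\rho\le p(1+t_1)\sigma_1+(1-p)(1+t_2)\sigma_2 = (1+t)\,\sigma,
\]
where
\[
\sigma:=\frac{p(1+t_1)\sigma_1+(1-p)(1+t_2)\sigma_2}{1+t}.
\]
This $\sigma$ is a convex combination of $\sigma_1,\sigma_2$ with nonnegative weights $p(1+t_1)/(1+t)$ and $(1-p)(1+t_2)/(1+t)$. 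The weights sum to one because $p(1+t_1)+(1-p)(1+t_2)=1+t$. Hence $\sigma\in\mathcal{F}_k$ by the first step, and also $t\ge0$.

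Therefore $(\sigma,t)$ is feasible for $\rho$ in \eqref{eq:Ekdef}, which gives $E_k(\rho)\le t=pE_k(\rho_1)+(1-p)E_k(\rho_2)$. The only point needing care is the renormalization: the combination of the $\sigma_i$ must be rescaled into an honest state with weights summing to one. Its coefficient then has to equal exactly $1+t$, which is what the identity above verifies. Beyond that there is no real obstacle. If attainment were in doubt, one would instead use $\varepsilon$-optimal pairs and let $\varepsilon\to0$.
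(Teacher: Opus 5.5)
Your proof is correct and follows essentially the same route as the paper. Both take optimal pairs $(\sigma_i,t_i)$, form the renormalized mixture $\sigma=\bigl(p(1+t_1)\sigma_1+(1-p)(1+t_2)\sigma_2\bigr)/(1+t)$, and use convexity of $\mathcal{F}_k$ to obtain a feasible pair for the mixture; your explicit check that $S_k$ is convex, by mixing the extensions, supplies a step the paper only asserts.
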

\begin{proof}
Let $\rho_1,\rho_2$ be two states and $\lambda\in[0,1]$.  Choose optimal pairs $(\sigma_1,t_1)$ and $(\sigma_2,t_2)$ such that
$\rho_i\le(1+t_i)\sigma_i$ and $t_i=E_k(\rho_i)$ for $i=1,2$, with $\sigma_i\in\mathcal{F}_k$.
Define the mixed state $\rho=\lambda\rho_1+(1-\lambda)\rho_2$ and the operator
\[
\sigma = \frac{\lambda(1+t_1)\sigma_1 + (1-\lambda)(1+t_2)\sigma_2}
{1+\lambda t_1+(1-\lambda)t_2}.
\]
Because $\mathcal{F}_k$ defined in \eqref{Fk} is convex, $\sigma\in\mathcal{F}_k$, and $\sigma\ge0$ with $\operatorname{Tr}\sigma=1$; hence $\sigma$ is a valid free state.
A direct calculation shows
\[
\bigl(1+\lambda t_1+(1-\lambda)t_2\bigr)\sigma
= \lambda(1+t_1)\sigma_1 + (1-\lambda)(1+t_2)\sigma_2
\ge \lambda\rho_1+(1-\lambda)\rho_2 = \rho .
\]
Therefore, by definition,
\[
E_k(\rho) \le \lambda t_1 + (1-\lambda)t_2 = \lambda E_k(\rho_1)+(1-\lambda)E_k(\rho_2),
\]
which establishes convexity.
\end{proof}

\subsection{Dual characterization, monotonicity, and optimality}
\label{subsec:dual}

Having defined the basic resource quantifier, we now turn to its structural properties. In this subsection we derive a dual formulation of $E_k$ in terms of a set of normalized witness operators. This dual form is the technical cornerstone of our approach: it provides an operational interpretation of $E_k$ as the maximal bias in a hypothesis-testing game, enables a simple proof of monotonicity under free operations, and shows that $E_k$ is optimal among all measures derived from a fixed linear witness. The results established here will be essential for the SDP formulations in Section~\ref{sec5} and the hierarchy analysis in Section~\ref{sec6}.

\begin{definition}\label{7}
The \emph{witness set} at level $k$ is
\[
\mathcal{W}_k := \{\, W \ge 0 : \operatorname{Tr}[W\sigma] \le 1 \text{ for all } \sigma\in\mathcal{F}_k \,\},
\] where  $\mathcal{F}_k$ is defined in \eqref{Fk}.
\end{definition}

The inequality $\operatorname{Tr}[W\sigma]\le 1$ for every free state replaces the more common ``block‑positive'' condition that appears in the traditional witness approach; it emerges naturally from the robust formulation of the resource measure.
Geometrically, $\mathcal{W}_k$ is a compact convex subset of the positive semidefinite cone, and it will be the feasible set of the dual semidefinite program.
As we shall see in Theorem~\ref{8}, the measure $E_k(\rho)$ equals the maximal value of $\operatorname{Tr}[W\rho]-1$ when $W$ ranges over $\mathcal{W}_k$, thereby providing an explicit SDP formulation that is indispensable for the proofs of monotonicity (Lemma~\ref{9}) and for numerical implementation.

\begin{theorem}\label{8}
For any state $\rho$, we have
\begin{equation}
E_k(\rho) = \max_{W\in\mathcal{W}_k}\, \bigl( \operatorname{Tr}[W\rho] - 1 \bigr).
\end{equation}
Both the primal and dual problems are strictly feasible; hence strong duality holds and the optimal value is attained.
\end{theorem}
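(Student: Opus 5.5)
We recast the primal as a conic program over the cone generated by $\mathcal{F}_k$, prove weak duality directly, and then get equality from a Slater-type condition; the witness set $\mathcal{W}_k$ of Definition~\ref{7} will appear as the dual feasible set.

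\textbf{Step 1 (conic primal).} Put $\mathcal{K}_k:=\{\mu\sigma:\mu\ge0,\ \sigma\in\mathcal{F}_k\}$, the closed convex cone generated by $\mathcal{F}_k$. It is closed because $\mathcal{F}_k$ is compact, convex and avoids $0$ (its elements have trace one). Writing $\tau=\lambda\sigma$, Definition~\ref{4} becomes
\[
1+E_k(\rho)=\min\{\operatorname{Tr}\tau:\ \tau\in\mathcal{K}_k,\ \tau\ge\rho\}.
\]
The constraint $\lambda\ge1$ can be dropped, since $\tau\ge\rho$ already forces $\operatorname{Tr}\tau\ge\operatorname{Tr}\rho=1$. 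Concretely, $\mathcal{K}_k$ is SDP-representable: $\tau\in\mathcal{K}_k$ iff there is an unnormalized $\tilde\tau\ge0$ on $AB_1\cdots B_k$ that is Bose-symmetric on the $B$'s, satisfies all partial-transpose constraints $\tilde\tau^{T_{B_J}}\ge0$, and has $\operatorname{Tr}_{B_2\cdots B_k}\tilde\tau=\tau$. So this is a genuine SDP.

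\textbf{Step 2 (weak duality).} The Lagrangian dual, with multiplier $W\ge0$ for $\tau-\rho\ge0$, is
\[
\max\{\operatorname{Tr}[W\rho]:\ W\ge0,\ I-W\in\mathcal{K}_k^*\}.
\]
The condition $I-W\in\mathcal{K}_k^*$ says exactly $\operatorname{Tr}[W\sigma]\le1$ for all $\sigma\in\mathcal{F}_k$, so the dual feasible set is $\mathcal{W}_k$. Weak duality is one line: for any $W\in\mathcal{W}_k$ and primal-feasible $\tau=\lambda\sigma$,
\[
\operatorname{Tr}[W\rho]\le\operatorname{Tr}[W\tau]=\lambda\operatorname{Tr}[W\sigma]\le\lambda .
\]
Hence $\max_{W\in\mathcal{W}_k}\operatorname{Tr}[W\rho]-1\le E_k(\rho)$.

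\textbf{Step 3 (strong duality via Slater).} Primal strict feasibility uses $\tau=\lambda I/(d_Ad_B)$ with $\lambda>d_Ad_B$, so that $\tau-\rho>0$. One needs $I/(d_Ad_B)\in\mathcal{F}_k$, and in fact in its relative interior, which holds because it is separable and lies in the interior of $\SEP\subseteq\mathcal{F}_k$. Its extension $I/(d_Ad_B^k)$ is also strictly positive under every partial transpose, so the extension SDP is strictly feasible as well. Dual strict feasibility uses $W=I/2$: then $W>0$, and $\operatorname{Tr}[(I-W)\sigma]=1/2>0$ for every $\sigma\in\mathcal{F}_k$, so $I-W$ lies in the interior of $\mathcal{K}_k^*$. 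Slater's theorem then gives equal optimal values on both sides, both attained. Attainment can also be seen directly: the primal minimum is attained by the compactness argument of Definition~\ref{4}, and $\mathcal{W}_k$ is compact because $\operatorname{Tr}[W\cdot I/(d_Ad_B)]\le1$ bounds $\operatorname{Tr}W$.

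\textbf{Main obstacle.} The delicate point is checking the interiority claims in Step 3 precisely, especially that $I-W$ is interior to $\mathcal{K}_k^*$ and that the extension variables admit a strictly feasible point. If this becomes messy, the fallback is a minimax argument. One writes
\[
1+E_k(\rho)=\min_{\sigma\in\mathcal{F}_k}\max_{W\ge0,\ \operatorname{Tr}[W\sigma]\le1}\operatorname{Tr}[W\rho],
\]
which reduces to $\lambda_{\max}(\sigma^{-1/2}\rho\,\sigma^{-1/2})$ on the support of $\sigma$. One then applies Sion's theorem to a compact convex truncation of the $W$-set and passes to the limit.
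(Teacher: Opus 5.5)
Your proposal is correct and follows essentially the same route as the paper. Like the paper, you pass to the cone generated by $\mathcal{F}_k$, attach a multiplier $W\ge 0$ to $\tau\ge\rho$, identify $I-W\in\mathcal{C}^*$ with $\operatorname{Tr}[W\sigma]\le 1$ on $\mathcal{F}_k$, and invoke Slater via a maximally mixed point. Your version is slightly more complete: you justify closedness of the cone (its trace-one generators avoid $0$), give the one-line weak duality, and exhibit the dual strictly feasible point $W=I/2$ that the statement asserts but the paper never checks, so the minimax fallback is unnecessary.
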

\begin{proof}
Let $\mathcal{C} := \{\,\alpha\sigma : \alpha\ge 0,\; \sigma\in\mathcal{F}_k \,\}$ be the convex cone generated by $\mathcal{F}_k$.
Since $\mathcal{F}_k$ is compact and convex, $\mathcal{C}$ is a closed convex cone.
The entanglement measure can be rewritten as
\begin{equation}
\label{Ekp}
    E_k(\rho) = \min_{\tau\in\mathcal{C}} \{\, \operatorname{Tr}\tau - 1 : \tau \ge \rho \,\}.
\end{equation}

Indeed, if $\tau=\alpha\sigma$ with $\sigma\in\mathcal{F}_k$, then $\operatorname{Tr}\tau=\alpha$ and the condition $\tau\ge\rho$ becomes $\alpha\sigma\ge\rho$, i.e., $\rho\le(1+t)\sigma$ with $t=\alpha-1$.
Conversely, any feasible $\sigma,t$ yields $\tau=(1+t)\sigma\in\mathcal{C}$, $\tau\ge\rho$, and $\operatorname{Tr}\tau-1=t$.
Thus the primal formulation \ref{Ekp} is equivalent to Definition~\ref{4}.

We introduce a Lagrange multiplier $W\ge0$ for the constraint $\tau-\rho\ge0$.
The Lagrangian is
\[
\mathcal{L}(\tau,W) = \operatorname{Tr}\tau - 1 + \operatorname{Tr}[W(\rho-\tau)]
= \operatorname{Tr}[\tau(I-W)] + \operatorname{Tr}[W\rho] - 1 .
\]
Minimising over $\tau\in\mathcal{C}$ yields a finite value only if the linear functional $\tau\mapsto \operatorname{Tr}[\tau(I-W)]$ is bounded below on $\mathcal{C}$;
this happens exactly when $I-W$ belongs to the dual cone
\[
\mathcal{C}^* := \{\, X : \operatorname{Tr}[X\tau]\ge 0\ \forall\,\tau\in\mathcal{C} \,\}.
\]
Because $\mathcal{C}$ is generated by $\mathcal{F}_k$, we have $I-W\in\mathcal{C}^*$ iff $\operatorname{Tr}[(I-W)\sigma]\ge0$ for every $\sigma\in\mathcal{F}_k$, i.e.
$\operatorname{Tr}[W\sigma]\le 1$ for all $\sigma\in\mathcal{F}_k$.
The dual problem therefore becomes
\[
\max_{W\ge0,\; I-W\in\mathcal{C}^*} \bigl( \operatorname{Tr}[W\rho] - 1 \bigr)
= \max_{W\in\mathcal{W}_k} \bigl( \operatorname{Tr}[W\rho] - 1 \bigr).
\]
To see strict feasibility, we choose the maximally mixed state on the extended system
\[
\omega_0 = \frac{I_{A} \otimes I_{B}^{\otimes k}}{d_A d_B^k}.
\]
This state satisfies all the semidefinite constraints strictly. Its partial trace over \(B_2,\ldots,B_k\) gives 
\[
\sigma_0 := \mathrm{Tr}_{B_2\dots B_k} \omega_0 = \frac{I_{AB}}{d_A d_B},
\]
where the \(k-1\) traced-out copies contribute a factor of \(d_B^{k-1}\), canceling the extra powers in the denominator. Taking any \(t_0 > d_A d_B \lambda_{\max}(\rho) - 1\), we have \((1+t_0)\sigma_0 \succ \rho\), so the constraint is strictly satisfied. Since this is a convex semidefinite program, Slater's condition holds \cite{Boyd2004}, which guarantees strong duality and attainment of the optimal value .
\end{proof}

Theorem~\ref{8} shows that
\[
E_k(\rho) = \max \{\operatorname{Tr}[W\rho]-1 : W\ge 0,\ \operatorname{Tr}[W\sigma]\le 1\ \forall\sigma\in\mathcal{F}_k\},
\]
where $\mathcal{F}_k$ is defined in \eqref{Fk}.  
Hence \(E_k(\rho)\) quantifies the largest possible excess of the expectation value \(\operatorname{Tr}[W\rho]\) over the maximal value achievable by any free state, subject to the normalization \(\operatorname{Tr}[W\sigma]\le 1\) for all \(\sigma\in\mathcal{F}_k\). This provides an operational interpretation of \(E_k\) as the optimal bias in a hypothesis test between \(\rho\) and the free set, where the test is implemented by a measurement operator \(W\) calibrated against the free states.

Theorem~\ref{8} is the central technical tool that underlies both the operational interpretation of $E_k$ and its monotonicity.  With the witness characterization in place, we can show that for any $k$-free channel $\Lambda$, the inequality $E_k(\Lambda(\rho))\le E_k(\rho)$ follows because the adjoint $\Lambda^*$ maps $\mathcal{W}_k$ into itself.  We now present the formal statement.
\begin{lemma}

    \label{9}
Let $\Lambda\in\mathcal{O}_k$ be a $k$-free operation, i.e., a CPTP map with $\Lambda(\mathcal{F}_k)\subseteq\mathcal{F}_k$. Then for every state $\rho$,
\[
E_k(\Lambda(\rho)) \le E_k(\rho).
\]
\end{lemma}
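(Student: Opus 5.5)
The cleanest route works directly in the primal form of Definition~\ref{4}, rather than through the dual. The dual route via Theorem~\ref{8} would require showing that $\Lambda^*$ maps $\mathcal{W}_k$ into the witness set at the output. That step is actually easy: $\Lambda^*$ is completely positive, so $\Lambda^*(W)\ge0$, and $\operatorname{Tr}[\Lambda^*(W)\sigma]=\operatorname{Tr}[W\Lambda(\sigma)]\le1$ because $\Lambda(\sigma)\in\mathcal{F}_k$. But it involves bookkeeping between input and output systems. I will give the primal argument as the main proof and mention the dual one as a cross-check.

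\textbf{Primal argument.} Fix $\rho$. By Definition~\ref{4} the minimum is attained, so choose $\sigma\in\mathcal{F}_k$ with $\rho\le(1+t)\sigma$, where $t=E_k(\rho)$. The operator $(1+t)\sigma-\rho$ is positive semidefinite. Since $\Lambda$ is positive and linear, applying it gives
\[
\Lambda(\rho)\le(1+t)\Lambda(\sigma).
\]
Trace preservation makes $\Lambda(\sigma)$ a density operator, and the $k$-free assumption of Definition~\ref{2} gives $\Lambda(\sigma)\in\mathcal{F}_k$. Hence the pair $(\Lambda(\sigma),t)$ is feasible in the minimization defining $E_k(\Lambda(\rho))$, which yields $E_k(\Lambda(\rho))\le t=E_k(\rho)$.

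\textbf{Dual cross-check.} Take an optimal output witness $W$ for $\Lambda(\rho)$ from Theorem~\ref{8}. Set $W'=\Lambda^*(W)$. Then $W'\ge0$, and for every $\sigma\in\mathcal{F}_k$,
\[
\operatorname{Tr}[W'\sigma]=\operatorname{Tr}[W\Lambda(\sigma)]\le1 .
\]
So $W'\in\mathcal{W}_k$, and
\[
E_k(\rho)\ge\operatorname{Tr}[W'\rho]-1=\operatorname{Tr}[W\Lambda(\rho)]-1=E_k(\Lambda(\rho)).
\]

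\textbf{Main subtlety.} The only point needing care is that $\mathcal{F}_k$ on the output space $A'B'$ must be understood as the level-$k$ set for those dimensions. Definition~\ref{2} implicitly uses this reading. With that interpretation fixed, each step above is a one-line consequence of positivity, trace preservation, and the invariance $\Lambda(\mathcal{F}_k)\subseteq\mathcal{F}_k$.
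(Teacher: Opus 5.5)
Your proof is correct, but your main argument is not the one the paper uses. The paper proves Lemma~\ref{9} only through the dual characterization of Theorem~\ref{8}: it takes an optimal witness $W$ for $\Lambda(\rho)$ and pulls it back to $W'=\Lambda^*(W)\in\mathcal{W}_k$. That is exactly your ``dual cross-check''.

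Your primal argument instead pushes forward an optimal free state. From $\rho\le(1+t)\sigma$ and positivity of $\Lambda$ you get $\Lambda(\rho)\le(1+t)\Lambda(\sigma)$, with $\Lambda(\sigma)\in\mathcal{F}_k$.

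This route is more elementary and more robust:
\begin{itemize}
\item It uses only linearity and positivity of $\Lambda$ together with $\Lambda(\mathcal{F}_k)\subseteq\mathcal{F}_k$.
\item It does not rely on the strong-duality part of Theorem~\ref{8}. The dual route needs strong duality on the output space $A'B'$, plus weak duality on the input.
\item It does not even need attainment in Definition~\ref{4}. An $\varepsilon$-optimal feasible pair $(\sigma,t)$ already gives $E_k(\Lambda(\rho))\le E_k(\rho)+\varepsilon$.
\end{itemize}

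What the paper's dual route buys is interpretation. Free operations act on witnesses through the adjoint, so monotonicity reads as: any test calibrated against free states on the output can be simulated on the input by first applying $\Lambda$. This meshes with the hypothesis-testing reading and with Corollaries~\ref{10} and~\ref{15}.

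One small slip in your opening paragraph: $\Lambda^*$ maps the \emph{output} witness set into the \emph{input} set $\mathcal{W}_k$, not the other way round. Your cross-check itself uses the correct direction.
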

\begin{proof}
From Theorem~\ref{8} we have $E_k(\rho)=\max_{W\in\mathcal{W}_k}(\operatorname{Tr}[W\rho]-1)$.  Choose $W\in\mathcal{W}_k$ optimal for $\Lambda(\rho)$, so that $E_k(\Lambda(\rho)) = \operatorname{Tr}[W\,\Lambda(\rho)] - 1$.  Define $W' = \Lambda^*(W)$.  Because $\Lambda$ is CPTP, $\Lambda^*$ is completely positive and unital; hence $W'\ge 0$.  Moreover, for any $\sigma\in\mathcal{F}_k$,
\[
\operatorname{Tr}[W'\sigma] = \operatorname{Tr}[W\,\Lambda(\sigma)] \le 1,
\]
since $\Lambda(\sigma)\in\mathcal{F}_k$ by the definition of $\mathcal{O}_k$ and $W\in\mathcal{W}_k$.  Thus $W'$ belongs to $\mathcal{W}_k$ as well.  Consequently,
\[
E_k(\rho) \ge \operatorname{Tr}[W'\rho] - 1 = \operatorname{Tr}[W\,\Lambda(\rho)] - 1 = E_k(\Lambda(\rho)),
\]
which completes the proof.
\end{proof}

Every LOCC protocol on $\mathcal{H}_A\otimes \mathcal{H}_B$ belongs to $\mathcal{O}_k$ for all $k\ge1$, known in \cite{Doherty2004}. Lemma \ref{9} immediately implies that $E_k$ is monotonically non-increasing under any LOCC operation. Hence $E_k$ satisfies the fundamental monotonicity requirement of an entanglement measure.

\begin{corollary}
\label{10}
Let $M$ be an entanglement measure of the form $M(\rho) = \operatorname{Tr}[W_0\rho]-1$ for a fixed $W_0\in\mathcal{W}_k$.  Then for every state $\rho$, $M(\rho) \le E_k(\rho)$.  Moreover, $E_k$ itself is expressed in this form with a state-dependent witness that attains the maximum.
\end{corollary}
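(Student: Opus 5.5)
The plan is to derive the corollary directly from the dual characterization in Theorem~\ref{8}, with no further optimization arguments. For the first claim, fix $W_0\in\mathcal{W}_k$ and an arbitrary state $\rho$. Since $W_0$ is a feasible point of the maximization in Theorem~\ref{8}, the objective at $W_0$ cannot exceed the optimal value:
\[
M(\rho)=\operatorname{Tr}[W_0\rho]-1\le \max_{W\in\mathcal{W}_k}\bigl(\operatorname{Tr}[W\rho]-1\bigr)=E_k(\rho).
\]
The inequality then holds for every $\rho$ and every fixed witness $W_0$ from the level-$k$ witness set. This is the sense in which no single linear witness can give a larger lower bound than $E_k$.

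For the second claim, I would use the attainment part of Theorem~\ref{8}. For each $\rho$ the maximum is achieved at some $W_\rho\in\mathcal{W}_k$. Attainment can be justified independently of duality: $\mathcal{W}_k$ is closed, because it is defined by the closed conditions $W\ge0$ and $\operatorname{Tr}[W\sigma]\le1$. It is also bounded, since testing against the full-rank free state $\sigma_0=I/(d_Ad_B)\in\mathcal{F}_k$ from the proof of Theorem~\ref{8} gives $\operatorname{Tr}W\le d_Ad_B$, and with $W\ge0$ this bounds $\|W\|_\infty$. A continuous linear objective therefore attains its maximum on this compact set. Setting $W_0=W_\rho$ gives $E_k(\rho)=\operatorname{Tr}[W_\rho\rho]-1$, so $E_k$ has exactly the form of $M$, except that the witness depends on the state. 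In particular, $E_k$ is the pointwise supremum of the family $\{M_{W_0}\}_{W_0\in\mathcal{W}_k}$, and that supremum is achieved.

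I expect no real obstacle. The only point that needs care is the attainment of the maximum. Theorem~\ref{8} asserts it, and the compactness argument above gives an independent check that does not rely on the strict-feasibility discussion.
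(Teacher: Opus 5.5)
Your proposal is correct and follows the paper's proof: the inequality is just the feasibility of $W_0$ in the maximization of Theorem~\ref{8}, and the second claim is the attainment of that maximum. Your compactness check is a valid supplement to the paper's bare appeal to Theorem~\ref{8} for attainment. It uses closedness of $\mathcal{W}_k$ and the bound $\operatorname{Tr}W\le d_Ad_B$, obtained by testing against $I/(d_Ad_B)\in\mathcal{F}_k$.
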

\begin{proof}
The inequality follows directly from Theorem~\ref{8}, because the maximum over all $W\in\mathcal{W}_k$ is at least as large as the value at $W_0$.  The second statement is precisely the definition of the maximum in Theorem~\ref{8}.
\end{proof}

Thus, no fixed linear witness can provide a stronger quantitative lower bound on the $k$-resource than $E_k$ does.
From Definition~\ref{4} we directly obtain the primal semidefinite program \eqref{eq:Ekdef}.
Since $\mathcal{F}_k$ is described by semidefinite constraints (e.g., $k$-extendibility or PPT conditions), the above program is a conic linear program that can be solved efficiently using standard SDP solvers.

\section{Hierarchy properties and limit behavior}
\label{sec4}
We now turn to the global structure of the family $\{E_k\}_{k=1}^\infty$.
Two main results are established:
\begin{itemize}
  \item[(i)] the measures form a non‑decreasing sequence $E_1(\rho)\le E_2(\rho)\le\cdots$, and they converge to the analogous measure defined with respect to the full separable set;
  \item[(ii)] the hierarchy is strict: for every $k$ there exists a state that is free at level $k$ but resourceful at level $k+1$.
\end{itemize}
The convergence statement gives an operational meaning to the limit of the DPS hierarchy, while the strict separation shows that no finite level can exhaust all entangled states.
These results mirror the well‑known qualitative behaviour of the DPS hierarchy but are now expressed within a quantitative resource framework.

We first introduce the limiting measure.
Recall that the set of separable states $\mathrm{SEP}$ is a compact convex subset of $\mathcal{D}_{AB}$ and satisfies $\mathrm{SEP}\subseteq\mathcal{F}_k$ for every $k$, where $\mathcal{F}_k$ is given in \eqref{Fk}.
Parallel to Definition~\ref{4}, we define
\begin{equation}\label{eq:ESEPdef}
E_{\mathrm{SEP}}(\rho) \;:=\; \min_{\sigma\in\mathrm{SEP}}\; \min\{\,t\ge 0 : \rho\le (1+t)\sigma\,\}.
\end{equation}
Because $\mathrm{SEP}$ is compact, the minimum is always attained.
By the same reasoning as in Lemma~\ref{5} and~\ref{6}, combined with the fact that every LOCC operation preserves $\mathrm{SEP}$, one verifies that $E_{\mathrm{SEP}}$ is faithful on $\mathrm{SEP}$, convex, and monotone under LOCC.

The chain of inclusions $\mathcal{F}_{k+1}\subseteq\mathcal{F}_k$ implies that the feasible set for $E_{k+1}$ is contained in that for $E_k$; consequently the minimum cannot decrease as $k$ grows.

\begin{lemma}\label{11}
For every bipartite state $\rho$ we have
\[
E_1(\rho)\;\le\;E_2(\rho)\;\le\;\cdots\;\le\;E_{\mathrm{SEP}}(\rho).
\]
\end{lemma}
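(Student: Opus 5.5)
The inequalities follow from monotonicity of a minimum under shrinking the feasible set. Fix $\rho$ and define, for any compact set $\mathcal{S}$ of states,
\[
g_{\mathcal{S}}(\rho) := \min_{\sigma\in\mathcal{S}}\min\{t\ge 0:\rho\le(1+t)\sigma\}.
\]
Then $E_k=g_{\mathcal{F}_k}$ by Definition~\ref{4}, and $E_{\mathrm{SEP}}=g_{\mathrm{SEP}}$ by \eqref{eq:ESEPdef}. The key observation is that if $\mathcal{S}'\subseteq\mathcal{S}$, then $g_{\mathcal{S}}(\rho)\le g_{\mathcal{S}'}(\rho)$. To see this, take an optimal pair $(\sigma',t')$ for $g_{\mathcal{S}'}(\rho)$; it exists by compactness, as noted after Definition~\ref{4}. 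Since $\sigma'\in\mathcal{S}$, the same pair is feasible for $g_{\mathcal{S}}(\rho)$, so $g_{\mathcal{S}}(\rho)\le t'=g_{\mathcal{S}'}(\rho)$.

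Next, I apply this observation to the inclusions. The DPS property $S_{k+1}\subseteq S_k$ from Section~\ref{sec2}, combined with Definition~\ref{1}, gives $\mathcal{F}_{k+1}\subseteq\mathcal{F}_k$ and hence $E_k(\rho)\le E_{k+1}(\rho)$ for every $k$. For the final comparison, it is enough that $\mathrm{SEP}\subseteq\mathcal{F}_k$ for every $k$; this then yields $E_k(\rho)\le E_{\mathrm{SEP}}(\rho)$ for all $k$, so the whole chain holds.

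The only step that needs any justification is the inclusion $\mathrm{SEP}\subseteq\mathcal{F}_k$, which the paper already records at the start of Section~\ref{sec4}. If a direct argument is wanted, write $\rho=\sum_i p_i\,\alpha_i\otimes\beta_i$ and set
\[
\tilde\rho=\sum_i p_i\,\alpha_i\otimes\beta_i^{\otimes k}.
\]
This $\tilde\rho$ is invariant under permutations of $B_1,\dots,B_k$. Every partial transpose of it is again a sum of tensor products of positive operators, because the transpose of a state is a state, so it is positive semidefinite. Finally, tracing out $B_2\cdots B_k$ returns $\rho$. Thus $\tilde\rho$ is a $k$-symmetric PPT extension and $\rho\in S_k=\mathcal{F}_k$. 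No duality or limiting argument is needed for this lemma.
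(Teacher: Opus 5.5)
Your argument is correct and is essentially the paper's proof: both rest on the inclusions $\mathrm{SEP}\subseteq\mathcal{F}_{k+1}\subseteq\mathcal{F}_k$ and on the fact that a minimum can only increase when the feasible set shrinks. Your explicit extension $\sum_i p_i\,\alpha_i\otimes\beta_i^{\otimes k}$ additionally proves $\mathrm{SEP}\subseteq\mathcal{F}_k$, which the paper only asserts.
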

\begin{proof}
For a given $\rho$, from $\mathrm{SEP}\subseteq\mathcal{F}_{k+1}\subseteq\mathcal{F}_k$ we obtain
\[
\min_{\sigma\in\mathcal{F}_k} \min\{\,t\ge0 : \rho\le(1+t)\sigma\,\}
\;\le\;
\min_{\sigma\in\mathcal{F}_{k+1}} \min\{\,t\ge0 : \rho\le(1+t)\sigma\,\}
\;\le\;
\min_{\sigma\in\mathrm{SEP}} \min\{\,t\ge0 : \rho\le(1+t)\sigma\,\},
\]
which is exactly the chain $E_k(\rho)\le E_{k+1}(\rho)\le E_{\mathrm{SEP}}(\rho)$.
\end{proof}

Hence the sequence $\{E_k(\rho)\}$ is monotonically non‑decreasing and bounded above by $E_{\mathrm{SEP}}(\rho)$; therefore the limit
\[
L(\rho):=\lim_{k\to\infty}E_k(\rho)
\]
exists and satisfies $L(\rho)\le E_{\mathrm{SEP}}(\rho)$.
The following theorem shows that the limit actually coincides with $E_{\mathrm{SEP}}(\rho)$.

\begin{theorem}\label{12}
For every state $\rho$ we have
\[
\lim_{k\to\infty} E_k(\rho) = E_{\mathrm{SEP}}(\rho).
\]
\end{theorem}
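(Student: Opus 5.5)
By Lemma~\ref{11} the limit $L(\rho)$ exists and $L(\rho)\le E_{\mathrm{SEP}}(\rho)$. It therefore remains to prove $E_{\mathrm{SEP}}(\rho)\le L(\rho)$, and I would do this with a compactness argument on the primal optimizers.

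For each $k$, Definition~\ref{4} guarantees attainment, so we can choose $\sigma_k\in\mathcal{F}_k$ and $t_k=E_k(\rho)$ with
\[
\rho\le(1+t_k)\sigma_k .
\]
The sequence $t_k$ is non-decreasing and bounded above by $E_{\mathrm{SEP}}(\rho)$. That bound is finite, since the maximally mixed state is separable and $\rho\le d_Ad_B\,\lambda_{\max}(\rho)\, I/(d_Ad_B)$. Hence $t_k\to L(\rho)$.

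The states $\sigma_k$ all lie in the compact set $\mathcal{D}_{AB}$, so we can extract a convergent subsequence $\sigma_{k_j}\to\sigma_\ast$. Passing to the limit in $\rho\le(1+t_{k_j})\sigma_{k_j}$ uses only the closedness of the positive semidefinite cone, and gives
\[
\rho\le(1+L(\rho))\sigma_\ast .
\]

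The key step is to show $\sigma_\ast\in\mathrm{SEP}$. Fix any $m$. Because the hierarchy is nested, $\mathcal{F}_{k+1}\subseteq\mathcal{F}_k$, we have $\sigma_{k_j}\in\mathcal{F}_m$ for all $j$ with $k_j\ge m$. The set $\mathcal{F}_m$ is closed: it is the image under the continuous partial-trace map of the compact set of symmetric PPT extension states on $A B_1\cdots B_m$. Therefore $\sigma_\ast\in\mathcal{F}_m$ for every $m$. Completeness of the DPS hierarchy, $\bigcap_m\mathcal{F}_m=\mathrm{SEP}$, then gives $\sigma_\ast\in\mathrm{SEP}$.

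With this, $\sigma_\ast$ is feasible in \eqref{eq:ESEPdef} with $t=L(\rho)$, so $E_{\mathrm{SEP}}(\rho)\le L(\rho)$ and equality follows.

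The only delicate point is justifying closedness of each $\mathcal{F}_m$, which is needed so that the limit $\sigma_\ast$ inherits membership at every fixed level. I would argue it as above: the extensions live in a compact set of density operators on a finite-dimensional space, the symmetry and PPT constraints are closed conditions, and the reduction map is continuous. The remaining ingredient, the DPS completeness theorem, is taken as given.

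An alternative route would go through the dual of Theorem~\ref{8}: one would show that $\mathcal{W}_k$ increases toward the SEP witness set and use compactness of the witnesses. The primal argument above is simpler, so I would use it.
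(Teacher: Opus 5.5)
Your proposal is correct and follows essentially the same route as the paper: you pick primal optimizers $(\sigma_k,t_k)$, extract a convergent subsequence, pass to the limit in $\rho\le(1+t_{k_j})\sigma_{k_j}$, and invoke DPS completeness to place the limit in $\mathrm{SEP}$. You also spell out, via nestedness and closedness of each fixed level $\mathcal{F}_m$, why $\sigma_\ast\in\bigcap_m\mathcal{F}_m$, a step the paper compresses into a single appeal to completeness, so your version is if anything more complete.
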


\begin{proof}
For each $k$ choose an optimal pair $(\sigma_k,t_k)$ with $\sigma_k\in\mathcal{F}_k$, $t_k=E_k(\rho)$, and $\rho\le(1+t_k)\sigma_k$.
Such a pair exists by compactness of $\mathcal{F}_k$ and the definition of $E_k$.
The sequence $\{t_k\}$ is bounded (by $E_{\mathrm{SEP}}(\rho)$) and the states $\sigma_k$ lie in the compact set of all bipartite density operators; hence we can extract a subsequence $\{k_j\}$ such that $t_{k_j}\to t^*$ and $\sigma_{k_j}\to\sigma^*$ in trace norm.
Because $\sigma_{k_j}\in\mathcal{F}_{k_j}$ and the DPS hierarchy is complete, i.e.\ $\bigcap_{k=1}^\infty\mathcal{F}_k = \mathrm{SEP}$, we obtain $\sigma^*\in\mathrm{SEP}$.
Passing to the limit in the matrix inequality, which is preserved under trace‑norm convergence, gives $\rho\le(1+t^*)\sigma^*$.
Therefore, by definition of $E_{\mathrm{SEP}}$,
\[
E_{\mathrm{SEP}}(\rho) \;\le\; t^* \;=\; \lim_{j\to\infty} t_{k_j} \;=\; L(\rho).
\]
We already know $L(\rho)\le E_{\mathrm{SEP}}(\rho)$ from Lemma~\ref{11}, so we conclude $L(\rho)=E_{\mathrm{SEP}}(\rho)$.
Because the whole sequence is monotonic, the limit coincides with the subsequential limit, and the statement follows.
\end{proof}

Theorem~\ref{12} endows the limit of the hierarchy with a clear operational meaning: as $k$ increases, the $k$‑extendibility resource measure converges to the full separability‑based measure $E_{\mathrm{SEP}}$.
In particular, for any entangled state $\rho$ there exists a finite $k$ such that $E_k(\rho)>0$.

The completeness result shows that the hierarchy eventually captures all entanglement.
One may ask whether every finite level is actually necessary, i.e.\ whether the sequence stabilises at some $k$.
The following result, adapted from Doherty {\it et al.}~\cite{DPS04}, answers this question in the negative.

\begin{theorem}\label{13}
For every integer $k\ge1$ there exists a bipartite state $\Gamma_k$ on $\mathbb{C}^{2k+1}\otimes\mathbb{C}^{2k+1}$ that belongs to $\mathcal{F}_k$ but not to $\mathcal{F}_{k+1}$, where  $\mathcal{F}_k$ is defined in \eqref{Fk}.
In particular,
\[
E_k(\Gamma_k)=0,\qquad E_{k+1}(\Gamma_k)>0.
\]
\end{theorem}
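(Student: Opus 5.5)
The plan is to split the statement into a \emph{membership part} ($\Gamma_k\in\mathcal{F}_k$) and a \emph{non-membership part} ($\Gamma_k\notin\mathcal{F}_{k+1}$). The claim about the measures then follows at once from Lemma~\ref{5}: $E_k(\Gamma_k)=0$ because $\Gamma_k\in\mathcal{F}_k$, and $E_{k+1}(\Gamma_k)>0$ because $\Gamma_k\notin\mathcal{F}_{k+1}$. All the work therefore lies in constructing $\Gamma_k$ and certifying the two set relations.

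For the construction I would take a highly symmetric one-parameter family on $\mathbb{C}^{d}\otimes\mathbb{C}^{d}$ with $d=2k+1$, following the DPS example. Concretely, let $\rho(p)=(1-p)\,\sigma_{\rm sep}+p\,\rho_{\rm ent}$, where $\sigma_{\rm sep}$ is a fixed separable, twirling-invariant state (for instance a normalized symmetric-subspace Werner state), $\rho_{\rm ent}$ is an entangled state of the same symmetry type (for instance the antisymmetric Werner state), and $p\in[0,1]$. Define the threshold $p_k:=\max\{p:\rho(p)\in\mathcal{F}_k\}$. The maximum is attained because $\mathcal{F}_k$ is compact and convex, so $\{p:\rho(p)\in\mathcal{F}_k\}$ is a closed interval containing $0$. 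Set $\Gamma_k:=\rho(p_k)$. Then $\Gamma_k\in\mathcal{F}_k$ holds by construction, and $\Gamma_k\notin\mathcal{F}_{k+1}$ is equivalent to the strict inequality $p_{k+1}<p_k$. Since $\mathcal{F}_{k+1}\subseteq\mathcal{F}_k$ already gives $p_{k+1}\le p_k$, only strictness needs proof.

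To establish $p_{k+1}<p_k$, I would exploit the symmetry of the family. Twirling the extension $\tilde\rho$ over $U\otimes\bar U^{\otimes k}$ (or $U\otimes U^{\otimes k}$, depending on the chosen family) together with the permutations of $B_1,\dots,B_k$ is a map that preserves every constraint (i)--(iii). Hence one may assume the extension lies in the commutant, which is spanned by (walled) Brauer or permutation operators. Membership in $\mathcal{F}_k$ then reduces to a finite-dimensional linear-algebra problem in the irreducible-representation blocks, and Schur--Weyl duality gives the spectra of the symmetric extension and of its partial transposes explicitly. I expect this computation to yield $p_k$ as an explicit rational function of $k$ and $d$ that is strictly decreasing in $k$. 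The choice $d=2k+1$ is what guarantees that the relevant Young diagrams (antisymmetrizations over up to $k+1$ copies) are nonvanishing, so that the obstruction at level $k+1$ actually appears.

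The main obstacle is the non-membership certificate at level $k+1$. Exhibiting a $k$-extension is constructive, but ruling out every $(k+1)$-symmetric PPT extension requires a dual object. For this I would use Theorem~\ref{8}: it suffices to produce $W\in\mathcal{W}_{k+1}$ with $\operatorname{Tr}[W\Gamma_k]>1$. My first attempt would take $W=I+c\,(P_{\rm anti}-\mu I)$, with the twirl-invariant operator $P_{\rm anti}-\mu I$ lifted to the $(k+1)$-extended space via the symmetric-extension dual (the DPS witness). The constant $c$ would be chosen so that $\operatorname{Tr}[W\sigma]\le 1$ for every $\sigma\in\mathcal{F}_{k+1}$. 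That bound would be checked blockwise on the twirled extensions, and in that check the dimension choice $d=2k+1$ is used essentially. If the explicit witness proves too messy, the fallback is to cite the DPS result directly for the strict inclusion $\mathcal{F}_{k+1}\subsetneq\mathcal{F}_k$ in this dimension, and then apply the threshold argument above to extract the specific state $\Gamma_k$.
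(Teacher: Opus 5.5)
Your skeleton is logically sound: define the threshold $p_k$, reduce $\Gamma_k\notin\mathcal{F}_{k+1}$ to $p_{k+1}<p_k$, and finish with Lemma~\ref{5}. However, the family you run it on makes the crucial strict inequality false. Every state in $\mathcal{F}_k$ is PPT: condition (ii) with $J=\{1\}$ gives $\tilde\rho^{T_{B_1}}\succeq 0$, and tracing out $B_2\cdots B_k$ gives $\rho_{AB}^{T_B}\succeq 0$. Now take your Werner family $\rho(p)=(1-p)P_{\rm s}/d_{\rm s}+p\,P_{\rm as}/d_{\rm as}$. The operator $\rho(p)^{T_B}$ has eigenvalue $(1-2p)/d$ on $|\Phi^+\rangle$ and positive eigenvalues elsewhere, and $\Tr[\rho(p)F]=1-2p$. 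So PPT and separability coincide on this family, both holding exactly for $p\le 1/2$. The same collapse occurs for isotropic states, i.e.\ for anything invariant under the $U\otimes U$ or $U\otimes\bar U$ twirl you want to exploit. Since $\mathrm{SEP}\subseteq\mathcal{F}_k\subseteq\mathrm{PPT}$, you get $p_1=p_2=\cdots=1/2$. Your $\Gamma_k=\rho(1/2)$ is therefore separable and lies in every $\mathcal{F}_j$. Consequently no $W\in\mathcal{W}_{k+1}$, in particular no witness of the form $I+c(P_{\rm as}-\mu I)$, can give $\Tr[W\Gamma_k]>1$. Your expectation that the Schur--Weyl block computation yields a threshold strictly decreasing in $k$ is correct for plain $k$-symmetric extendibility, where condition (ii) is dropped. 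It is not correct for the PPT-symmetric extensions that define $\mathcal{F}_k$: with full unitary symmetry the partial-transpose constraints already force separability at level $1$. The choice $d=2k+1$ and the Young-diagram considerations cannot rescue this, since the collapse holds in every dimension.

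A proof actually needs three things:
\begin{enumerate}
\item a PPT \emph{entangled} state whose symmetry group is small enough that PPT does not collapse to SEP;
\item an explicit $k$-symmetric PPT extension of it;
\item a certificate, e.g.\ an element of $\mathcal{W}_{k+1}$, excluding every $(k+1)$-extension.
\end{enumerate}
The threshold device does not reduce this work. On any segment starting at a separable state, $p_{k+1}<p_k$ is just a restatement of the existence of a point of $\mathcal{F}_k\setminus\mathcal{F}_{k+1}$ on that segment. The paper does not carry out such a computation either. It imports the state, its extension and its witness from the explicit DPS construction, a $(2k+1)\otimes(2k+1)$ state with only a discrete cyclic permutation symmetry. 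It then concludes via Lemma~\ref{5} exactly as you do. Your fallback of citing DPS is thus in substance the paper's proof, and once a state in $\mathcal{F}_k\setminus\mathcal{F}_{k+1}$ is cited, the threshold step is superfluous. The primary route you describe, however, does not prove the statement.
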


\begin{proof}
The explicit construction of $\Gamma_k$ is given in Ref.~\cite{DPS04}; we recall the essential features.
Working in the computational basis $\{|0\rangle,\dots,|2k\rangle\}$, the state is defined by an unnormalised cyclic pattern that is symmetric under a discrete group of permutations.
The authors proved that $\Gamma_k$ admits a $k$‑symmetric PPT extension, so $\Gamma_k\in\mathcal{F}_k$.
They also exhibited a witness that contradicts the existence of any $(k+1)$‑symmetric PPT extension; hence $\Gamma_k\notin\mathcal{F}_{k+1}$.
The first equality follows directly from faithfulness in Lemma~\ref{5}, because $\Gamma_k\in\mathcal{F}_k$.
Since $\Gamma_k\notin\mathcal{F}_{k+1}$, faithfulness of $E_{k+1}$ gives $E_{k+1}(\Gamma_k)>0$.
\end{proof}

Note that for every separable state $\rho$, we have $E_k(\rho)=0$ for all $k\ge 1$, since $\mathcal{SEP}\subseteq \mathcal{F}_k$ and Lemma~\ref{5} gives $E_k(\rho)=0$ iff $\rho\in\mathcal{F}_k$. 
We emphasize, however, that the converse is false for any finite $k$: there exist entangled states (e.g., the Horodecki bound entangled state at $k=1$) for which $E_k(\rho)=0$. The equivalence $E_k(\rho)=0 \iff \rho\in\mathcal{SEP}$ is recovered only in the limit $k\to\infty$, as established in Theorem~\ref{12}.

For finite $k$, $E_k$ is an entanglement measure in the resource-theoretic sense: it is faithful on the free set $\mathcal{F}_k$, convex, and monotone under LOCC. However, it is \emph{not} a separability criterion at any finite $k$, since $\mathcal{F}_k$ strictly contains the separable set. Faithfulness to separability is obtained only in the limit $k\to\infty$ (Theorem~\ref{12}). This is analogous to the PPT-robustness, which is a well-established entanglement measure despite vanishing on all PPT states, including bound entangled ones.)

Theorem~\ref{13} demonstrates that no finite level $k$ can exhaust all entangled states; the full infinite hierarchy is indispensable.
This strict separation extends the well‑known fact that $\mathrm{PPT}$ is strictly larger than the set of separable states to all higher levels of symmetric extensions.

\section{SDP formulations and lower bounds}
\label{sec5}
The resource measure $E_k$ is defined in Definition~\ref{4} as a convex optimization over the free set $\mathcal{F}_k$. It admits a semidefinite representation: a state $\sigma$ belongs to $\mathcal{F}_k$ precisely when it has a $k$-symmetric PPT extension. Consequently, the optimization problem defining $E_k$ in \eqref{eq:Ekdef}  can be cast as a single semidefinite program (SDP). 
This observation has two important consequences. 
First, it makes $E_k$ exactly computable in principle for moderate system sizes and small $k$. 
Second, it provides a dual SDP whose feasible points yield efficiently computable lower bounds on $E_k$.
\subsection{Primal SDP for $E_k$}
Let us denote by $\mathcal{H}_A$ and $\mathcal{H}_B$ the local Hilbert spaces of the bipartite system, and set $d_A=\dim\mathcal{H}_A$, $d_B=\dim\mathcal{H}_B$.
For an integer $k\ge 1$ we introduce $k$ copies of $B$, i.e.\ $\mathcal{H}_{B_1}\otimes\cdots\otimes\mathcal{H}_{B_k}\cong\mathcal{H}_B^{\otimes k}$.
A bipartite state $\sigma$ belongs to $\mathcal{F}_k$ defined in \eqref{Fk} if and only if there exists a density operator
\[
\omega \in \mathcal{D}\bigl(\mathcal{H}_A\otimes\mathcal{H}_B^{\otimes k}\bigr)
\]
satisfying three conditions. 
\begin{enumerate}
  \item[(i)] The state $\omega$ is symmetric under any permutation of the $B$-systems:
  $P_\pi \,\omega\, P_\pi^* = \omega$ for all $\pi\in S_k$, where $P_\pi$ is the unitary implementing the permutation;
  \item[(ii)] the partial transposition with respect to each $\mathcal{H}_{B_i}$ gives a positive semidefinite operator;
  \item[(iii)] the reduced state on $AB_1$ coincides with $\sigma$:
  $\Tr_{B_2\cdots B_k} \omega = \sigma$.
\end{enumerate}
Conditions (i)--(iii) are linear or semidefinite constraints; therefore $\mathcal{F}_k$ is an SDP-representable set.
(These conditions are exactly the ones appearing in Definition~\ref{1}; we recall them here for convenience.)
Now we recall the definition
\[
E_k(\rho) = \min_{\sigma\in\mathcal{F}_k} \min\{\, t\ge 0 : \rho \le (1+t)\sigma \,\}.
\]
Introducing the extended variable \(\omega\) which is the same extended state as above, and it now serves as the matrix variable of the SDP. Then we can eliminate the quantifier \(\exists\sigma\) and obtain the following equivalent optimisation problem
\begin{equation}\label{eq:primalSDP}
\begin{array}{rl}
\mathrm{minimise} & t \\
\mathrm{subject~to} & (1+t)\operatorname{Tr}_{B_2\cdots B_k}\omega - \rho \succeq 0, \\
& \operatorname{Tr}\omega = 1, \\
& \omega \succeq 0, \\
& \omega^{T_{B_J}} \succeq 0 \quad \forall \, \emptyset \neq J \subseteq \{1,\dots,k\}, \\
& P_\pi \omega P_\pi^* = \omega \quad \forall \pi \in S_k, \\
& t \ge 0, \quad \omega \in \mathcal{B}(\mathcal{H}_A \otimes \mathcal{H}_B^{\otimes k}).
\end{array} 
\end{equation}
where $\mathcal{B}$ denotes the space of Hermitian operators.
The first constraint directly enforces $\rho\le(1+t)\sigma$ with $\sigma$ being the partial trace of $\omega$.
Because $\omega\succeq0$ and $\Tr\omega=1$, $\sigma$ is automatically a valid density matrix, and the symmetry condition guarantees that all the $k$ reduced states on $AB_i$ are identical to $\sigma$.

\begin{lemma}
\label{14}
Program \eqref{eq:primalSDP} is a semidefinite program whose optimal value equals $E_k(\rho)$.
\end{lemma}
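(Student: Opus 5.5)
Lemma~\ref{14} makes two claims: program \eqref{eq:primalSDP} is an SDP, and its optimal value equals $E_k(\rho)$. I will prove them in that order.

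\emph{The program is an SDP.} The only apparent obstruction is the bilinear term $(1+t)\operatorname{Tr}_{B_2\cdots B_k}\omega$. I will remove it with the homogenisation already used in the proof of Theorem~\ref{8}. Set $\tilde\omega := (1+t)\omega$. The constraints then become linear in $\tilde\omega$:
\begin{itemize}
\item $\operatorname{Tr}_{B_2\cdots B_k}\tilde\omega \succeq \rho$;
\item $\tilde\omega\succeq 0$ and $\tilde\omega^{T_{B_J}}\succeq 0$ for all nonempty $J$;
\item $P_\pi\tilde\omega P_\pi^*=\tilde\omega$ for all $\pi\in S_k$.
\end{itemize}
The objective is $t=\operatorname{Tr}\tilde\omega-1$. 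The condition $t\ge 0$ is automatic, because taking traces in the first constraint gives $\operatorname{Tr}\tilde\omega\ge\operatorname{Tr}\rho=1$.

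Partial trace, partial transposition and conjugation by $P_\pi$ are all linear maps. The rescaled problem therefore has a linear objective, affine equality constraints and linear matrix inequalities, so it is a standard-form SDP. The change of variables is a bijection between feasible points: it sends $(t,\omega)$ to $\tilde\omega$, and $\tilde\omega$ is sent back to $t=\operatorname{Tr}\tilde\omega-1$ and $\omega=\tilde\omega/\operatorname{Tr}\tilde\omega$. Since it preserves the objective value, \eqref{eq:primalSDP} and the rescaled problem have the same optimal value. This is also the concrete form of the primal \eqref{Ekp}, with $\mathcal{C}$ represented through extensions.

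\emph{The optimal value equals $E_k(\rho)$.} I will show both inequalities.
\begin{itemize}
\item \emph{Program value at least $E_k(\rho)$.} Take any feasible $(t,\omega)$ of \eqref{eq:primalSDP} and set $\sigma:=\operatorname{Tr}_{B_2\cdots B_k}\omega$. The constraints say exactly that $\omega$ satisfies (i)--(iii) of Definition~\ref{1}, so $\sigma\in\mathcal{F}_k$, and $\rho\le(1+t)\sigma$. Hence $t$ is feasible in \eqref{eq:Ekdef}, which gives $t\ge E_k(\rho)$.
\item \emph{Program value at most $E_k(\rho)$.} Take an optimal pair $(\sigma,t)$ in Definition~\ref{4}; it exists by the attainment stated there. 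Membership $\sigma\in\mathcal{F}_k$ supplies an extension $\omega$ satisfying (i)--(iii). Then $(t,\omega)$ is feasible for \eqref{eq:primalSDP} with objective value $E_k(\rho)$.
\end{itemize}
Together these show the optimal value is $E_k(\rho)$ and that it is attained.

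\emph{Main obstacle: matching the PPT constraints.} Definition~\ref{1} asks for $\tilde\rho^{T_{B_J}}\succeq0$ for every subset $J$. The SDP lists nonempty $J$ together with $\omega\succeq0$, which is the case $J=\emptyset$, so the two lists coincide. The enumerated description (ii) in Section~\ref{sec5} mentions only single $B_i$, which is weaker than Definition~\ref{1}. I will state explicitly that the program \eqref{eq:primalSDP} uses all subsets, so that it matches Definition~\ref{1}.

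Symmetry then guarantees that the choice of which $B_i$ remains after the partial trace is immaterial. The remaining step is a bookkeeping identity: with symmetry, the $J$ and $\{1,\dots,k\}\setminus J$ constraints are equivalent up to a global transpose. This point is optional and only reduces the number of constraints, so I will not rely on it.
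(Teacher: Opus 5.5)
Your proof is correct, and on one point it is more careful than the paper's. The paper asserts that equivalence with Definition~\ref{4} ``follows directly'' from the extension characterisation of $\mathcal{F}_k$, and that all constraints are ``linear or semidefinite in the variables $t$ and $\omega$''. It then spends the proof on strict feasibility: the maximally mixed extension $\omega_0=I/(d_Ad_B^k)$ with large $t_0$, Slater's condition for strong duality, and Theorem~\ref{8} for the dual optimum.

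However, the constraint $(1+t)\Tr_{B_2\cdots B_k}\omega\succeq\rho$ is bilinear in $(t,\omega)$, so the program as literally written is not an SDP. Your substitution $\tilde\omega=(1+t)\omega$, with $t=\Tr\tilde\omega-1\ge0$ forced by taking traces, is exactly what makes the claim true. It is also the concrete form of the cone program \eqref{Ekp} from the proof of Theorem~\ref{8}.

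Your two-sided argument for the optimal value makes explicit the step the paper leaves implicit. You also obtain attainment from compactness via Definition~\ref{4} rather than from duality. What the paper's route adds is strict feasibility, and hence strong duality with the witness program. That is not part of the lemma as stated; if you want it, $\tilde\omega_0=(1+t_0)\omega_0$ is strictly feasible for your linearised program. Your observation that the PPT constraints must range over all subsets $J$ to match Definition~\ref{1} is a consistency check the paper skips; item (ii) of Section~\ref{sec5} lists only single $B_i$.

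One correction to your final aside, which you rightly do not use. Write $J^c:=\{1,\dots,k\}\setminus J$. The global transpose of $\omega^{T_{B_J}}$ is $\omega^{T_{AB_{J^c}}}$, not $\omega^{T_{B_{J^c}}}$, so the $J$ and $J^c$ constraints are not equivalent in general. For $k=1$, your claim would make $\omega\succeq0$ equivalent to $\omega^{T_{B_1}}\succeq0$. Drop that sentence.
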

\begin{proof}
The equivalence of \eqref{eq:primalSDP} with Definition~\ref{4} follows directly from the characterisation of $\mathcal{F}_k$ recalled above. All constraints are linear or semidefinite in the variables $t$ and $\omega$, hence we have a valid SDP. 
To see strict feasibility, choose the maximally mixed state on the extended system, $\omega_0=I/(d_A d_B^k)$. This state satisfies $\omega_0>0$, $\omega_0^{T_{B_1}}=\omega_0>0$, and the symmetry condition. Its partial trace over $B_2\cdots B_k$ gives $\sigma_0=I/(d_A d_B)$. Taking any $t_0>d_A d_B\lambda_{\max}(\rho)-1$ yields $(1+t_0)\sigma_0\succ \rho$, so the matrix inequality is strictly satisfied. Thus the primal problem admits a strictly feasible point. By Slater's condition, strong duality holds and the optimal value is attained in the primal. The attainment of the dual optimum follows from Theorem~\ref{8}, which expresses $E_k(\rho)$ as a maximum over the compact set $\mathcal{W}_k$.
\end{proof}

\subsection{Dual SDP and witness lower bounds}
Applying standard SDP duality to~\eqref{eq:primalSDP} produces a dual program that exactly coincides with the witness formulation already derived in Theorem~\ref{8}.
For completeness we sketch the derivation.
We assign a Lagrange multiplier $W\succeq0$ to the constraint $(1+t)\sigma-\rho\succeq0$, a scalar $s$ to $\Tr\omega=1$, and appropriate multipliers $Y\succeq0$, $Z\succeq0$ to $\omega\succeq0$ and $\omega^{T_{B_1}}\succeq0$, respectively, together with Hermitian matrices for the symmetry constraints.
After eliminating the primal variables $t$ and $\omega$, one obtains the dual
\[
\begin{aligned}
\text{maximise} \quad & \Tr[W\rho] - 1\\
\text{subject to} \quad & W \succeq 0,\\
& \Tr[W\sigma] \le 1 \quad \forall\,\sigma\in\mathcal{F}_k.
\end{aligned}
\]
The condition $\Tr[W\sigma]\le1$ for all $\sigma\in\mathcal{F}_k$ is precisely the statement that $W\in\mathcal{W}_k$, with $\mathcal{W}_k$ as in Definition~\ref{7}.
A compact, self‑contained proof that does not rely on the explicit SDP representation of $\mathcal{F}_k$ was already given in Theorem~\ref{8}; hence we do not repeat the technical details here.
A particularly useful consequence of the dual formulation is that \emph{any} feasible $W\in\mathcal{W}_k$ provides a lower bound
\[
\Tr[W\rho] - 1 \;\le\; E_k(\rho).
\]
This observation allows one to obtain computationally cheap lower bounds on the resource content of a state without solving the full SDP: one simply needs to find a single \(W\ge 0\) satisfying \(\operatorname{Tr}[W\sigma]\le 1\) for all \(\sigma\in\mathcal{F}_k\).
For instance, the set $\mathcal{W}_k$ can be approximated by an SDP hierarchy itself, or one can use the fact that every properly normalised $k$-entanglement witness belongs to $\mathcal{W}_k$.
We formalise this in the following corollary.
\begin{corollary}\label{15}
Let \(W\) be any Hermitian operator satisfying \(W\ge 0\) and \(\operatorname{Tr}[W\sigma]\le 1\) for all \(\sigma\in\mathcal{F}_k\).
Then for every state $\rho$ we have
\[
E_k(\rho) \;\ge\; \Tr[W\rho] - 1.
\]
If $\Tr[W\rho] > 1$, then the bound is non‑trivial and certifies that $\rho\notin\mathcal{F}_k$.
\end{corollary}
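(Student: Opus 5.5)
The inequality follows from the dual characterization in Theorem~\ref{8}, and I would also give a direct primal argument so the statement does not depend on strong duality. Since the hypothesis is exactly $W\in\mathcal{W}_k$ (Definition~\ref{7}), Theorem~\ref{8} gives
\[
E_k(\rho)=\max_{W'\in\mathcal{W}_k}\bigl(\Tr[W'\rho]-1\bigr)\ge \Tr[W\rho]-1 .
\]
This is the same observation as Corollary~\ref{10} with $W_0=W$.

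For the primal route, I use only weak duality. Take an optimal pair $(\sigma,t)$ from Definition~\ref{4}, which exists by the attainment remark there. Then $\sigma\in\mathcal{F}_k$, $t=E_k(\rho)$, and $(1+t)\sigma-\rho\succeq0$. Because $W\succeq 0$, the trace of $W$ against a positive semidefinite operator is nonnegative, so
\[
\Tr[W\rho]\le (1+t)\Tr[W\sigma].
\]
The normalization hypothesis gives $\Tr[W\sigma]\le 1$. Since $1+t>0$, this yields $\Tr[W\rho]\le 1+t$, that is, $\Tr[W\rho]-1\le E_k(\rho)$.

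For the certification claim, suppose $\Tr[W\rho]>1$. The bound just proved gives $E_k(\rho)>0$. The faithfulness statement of Lemma~\ref{5} says $E_k(\rho)=0$ if and only if $\rho\in\mathcal{F}_k$, so $\rho\notin\mathcal{F}_k$. The same conclusion also follows without Lemma~\ref{5}: if $\rho$ were in $\mathcal{F}_k$, the hypothesis on $W$ applied to $\sigma=\rho$ would give $\Tr[W\rho]\le1$, a contradiction.

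None of these steps is a real obstacle. The only points to check are that positivity of $W$ is what makes $\Tr[W(\cdot)]$ monotone with respect to the Loewner order, and that the factor $1+t$ is positive, so multiplying the normalization inequality by it preserves the direction.
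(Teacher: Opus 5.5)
Your proposal is correct. Your first argument is exactly the paper's proof: the hypothesis says $W\in\mathcal{W}_k$, and Theorem~\ref{8} writes $E_k(\rho)$ as a maximum of $\Tr[W'\rho]-1$ over $\mathcal{W}_k$.

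Your second, primal argument is a genuinely different and more elementary route. It uses only weak duality, namely that $\Tr[W((1+t)\sigma-\rho)]\ge 0$ when both factors are positive semidefinite. So the corollary no longer depends on the strong-duality and Slater part of Theorem~\ref{8}. It does not even need the primal minimum to be attained: $\Tr[W\rho]-1\le t$ holds for every feasible pair $(\sigma,t)$, and taking the infimum gives the bound.

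Once Theorem~\ref{8} is available, the paper's route is a one-liner, and in context it also shows the bound is tight for an optimal witness. Your route makes the lower bound self-contained, which is arguably the more natural justification for a statement about arbitrary feasible witnesses.

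The paper's proof stops after the inequality and leaves the certification clause implicit. Your direct argument settles it: applying the hypothesis with $\sigma=\rho$ would force $\Tr[W\rho]\le 1$. This needs neither the inequality nor Lemma~\ref{5}.
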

\begin{proof}
From Theorem~\ref{8} we know that $E_k(\rho)$ is the maximum of $\Tr[W\rho]-1$ over $W\in\mathcal{W}_k$.
Since the given $W$ belongs to $\mathcal{W}_k$, the maximum is at least $\Tr[W\rho]-1$, which proves the inequality.
\end{proof}
\subsection{The case $k=1$: reduction to PPT-robustness}
When $k=1$ the symmetry condition is vacuous and the PPT condition $\omega^{T_{B_1}}\succeq0$ on the extension $\omega_{AB_1}$ simply means that the state itself is PPT.
Hence $\mathcal{F}_1$ is exactly the set of PPT states.
Our definition then reduces to
\[
E_1(\rho) = \min_{\sigma\in\mathrm{PPT}} \min\{\, t\ge0 : \rho \le (1+t)\sigma \,\},
\]
which is precisely the \emph{PPT-robustness} introduced by Brand{\~a}o~\cite{Brandao2005}.
The dual set $\mathcal{W}_1$ becomes $\{W\succeq0: \Tr[W\sigma]\le1\ \forall\,\sigma\in\mathrm{PPT}\}$.
For the special case \(k=1\), using the well-known self-duality of the PPT cone, this condition is equivalent to \(0\le W\le I\) and \(0\le W^{T_B}\le I\) (up to rescaling). We emphasize that this equivalence is specific to the \(k=1\) PPT case and does not extend to general \(k\).
Consequently, $E_1(\rho)$ can be computed via the SDP
\[
E_1(\rho) = \max_{W} \{\, \Tr[W\rho] - 1 : 0\le W\le I,\; 0\le W^{T_B}\le I \,\},
\]
which is exactly the dual formulation originally obtained in Ref.~\cite{Brandao2005}.
Thus our hierarchy naturally extends the PPT-robustness from $k=1$ to arbitrary $k$.

The SDP~\eqref{eq:primalSDP} involves a matrix variable $\omega$ of size $d_A \cdot d_B^k$, which grows exponentially with $k$.
Therefore, while $E_k(\rho)$ is in principle exactly computable for any fixed $k$, the problem quickly becomes intractable for large $k$.
Nevertheless, the hierarchy provides a systematic trade‑off between computational effort and tightness of the approximation: small values of $k$ (in particular $k=1,2$) already yield meaningful bounds for many states and are often tractable on a classical computer.
Moreover, the dual lower bounds from Corollary~\ref{15} can be made computationally efficient by restricting the search for $W$ to families of witnesses with a compact parameterisation, such as those derived from $k$-extendibility witnesses in Ref.~\cite{DPS04}.
We will explore concrete numerical examples in Section~\ref{sec6}.

\section{Numerical experiments}
\label{sec6}

In this section we present a series of numerical experiments that illustrate the behaviour of the resource measure $E_k$ defined in Definition~\ref{4}.
All computations are performed using the SDP solver MOSEK within the YALMIP~\cite{Lofberg2004} environment.
For high‑dimensional families we exploit symmetry reductions (block‑diagonalisation) to make the SDP tractable; details of these reductions are standard and can be found in Refs.~\cite{DPS04,Chen2025}.

\subsection{Basic benchmarks and the two‑qubit Werner state}
We begin with the simplest nontrivial system, $2\otimes 2$, where $k=1$ already gives a tight description.
The two‑qubit Werner state is
\begin{equation}
\rho_{\rm W}(p) = p\,\ket{\psi^-}\!\bra{\psi^-} + \frac{1-p}{4}I_4,
\qquad 0\le p\le 1,
\end{equation}
with $\ket{\psi^-}=(\ket{01}-\ket{10})/\sqrt{2}$.

\begin{lemma}
\label{16}
For $k=1$, the free set $\mathcal{F}_1$ defined in Definition~\ref{1} coincides with the set of PPT states
\[
\mathcal{F}_1 = \mathcal{PPT}.
\]
In particular, for the $2\otimes 2$ Werner state $\rho_{\mathrm{W}}(p)$, we have
\[
E_1(\rho_{\mathrm{W}}(p)) = 0 \iff p \le \frac{1}{3}.
\]
\end{lemma}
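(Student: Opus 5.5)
The first claim, $\mathcal{F}_1=\mathcal{PPT}$, is essentially bookkeeping. For $k=1$ the extended space $\mathcal{H}_A\otimes\mathcal{H}_{B_1}$ is just $\mathcal{H}_A\otimes\mathcal{H}_B$, and the reduction condition (iii) traces out nothing, so $\tilde\rho=\rho$. The symmetry condition (i) is vacuous because the only permutation is the identity. The PPT condition (ii) ranges over $J=\emptyset$ and $J=\{1\}$, which gives $\rho\succeq0$ and $\rho^{T_B}\succeq0$. Hence $\rho\in S_1$ if and only if $\rho\in\mathcal{PPT}$, matching the convention $S_1=\PPT$ and Definition~\ref{1}. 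I will state this in two lines.

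For the Werner state, I will use faithfulness (Lemma~\ref{5}): $E_1(\rho_{\rm W}(p))=0$ holds if and only if $\rho_{\rm W}(p)\in\mathcal{F}_1=\mathcal{PPT}$. The claim therefore reduces to computing when $\rho_{\rm W}(p)^{T_B}\succeq0$.

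I will write $\ket{\psi^-}\!\bra{\psi^-}=\tfrac14(I-\sigma_x\otimes\sigma_x-\sigma_y\otimes\sigma_y-\sigma_z\otimes\sigma_z)$. Partial transposition on $B$ flips only the sign of $\sigma_y$, so
\[
(\ket{\psi^-}\!\bra{\psi^-})^{T_B}=\tfrac14(I-\sigma_x\otimes\sigma_x+\sigma_y\otimes\sigma_y-\sigma_z\otimes\sigma_z)=\tfrac12 I-\ket{\phi^+}\!\bra{\phi^+}.
\]
Equivalently, this is $\tfrac12 F$ up to a local unitary, where $F$ is the swap operator. Its eigenvalues are $-\tfrac12$ (once, on $\ket{\phi^+}$) and $\tfrac12$ (three times). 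Consequently $\rho_{\rm W}(p)^{T_B}$ has eigenvalues $\tfrac{1-p}{4}+\tfrac p2=\tfrac{1+p}{4}$ (threefold) and $\tfrac{1-p}{4}-\tfrac p2=\tfrac{1-3p}{4}$. Since $0\le p\le1$, the first is always nonnegative, and the second is nonnegative exactly when $p\le\tfrac13$. This gives the equivalence.

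The only step needing care is the sign bookkeeping in the partial transpose, namely that $T_B$ sends $\sigma_y\mapsto-\sigma_y$ and leaves $\sigma_x,\sigma_z$ fixed. I will double-check it by writing $\rho_{\rm W}(p)$ explicitly in the computational basis. The $\ket{01}\!\bra{10}$ coherence $-p/2$ moves to the $\ket{00}\!\bra{11}$ slot, which produces the $2\times2$ block $\begin{pmatrix}\frac{1-p}{4}&-\frac p2\\-\frac p2&\frac{1-p}{4}\end{pmatrix}$ with eigenvalues $\tfrac{1-p}{4}\pm\tfrac p2$, confirming the computation.
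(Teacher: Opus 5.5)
Your proof is correct, but for the Werner part you take a more direct route than the paper. The paper gets $\mathcal{F}_1=\PPT$ by citing the convention $S_1=\PPT$. It then chains $\rho_{\rm W}(p)\in\PPT \iff \rho_{\rm W}(p)\in\SEP \iff p\le\frac13$, using the Peres--Horodecki theorem in $2\otimes 2$ and the known separability threshold of the two-qubit Werner state from the literature, and closes with faithfulness (Lemma~\ref{5}).

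You instead check that the $k=1$ extension conditions really collapse to $\rho^{T_B}\succeq 0$. You then compute the spectrum of $\rho_{\rm W}(p)^{T_B}$ explicitly, obtaining $\frac{1+p}{4}$ (threefold) and $\frac{1-3p}{4}$. Your sign bookkeeping ($\sigma_y\mapsto-\sigma_y$ under $T_B$, and the coherence moving to the $\ket{00}\!\bra{11}$ slot) is right.

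What each route buys:
\begin{itemize}
\item Your argument is self-contained and uses only what the claim actually needs, namely membership in $\PPT$. Neither Peres--Horodecki nor an imported threshold is required. The standard derivation of that threshold is essentially your computation plus Peres--Horodecki anyway.
\item The paper's detour through $\SEP$ yields one extra fact. In $2\otimes 2$ one has $\SEP\subseteq\mathcal{F}_k\subseteq\mathcal{F}_1=\PPT=\SEP$, so every $E_k$ vanishes on this family exactly at the separable Werner states. The paper uses this right afterwards to assert $E_1=E_2=E_3$. To get that consequence from your route, you would have to add Peres--Horodecki back in.
\end{itemize}
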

\begin{proof}
The identity $\mathcal{F}_1 = \mathcal{PPT}$ follows immediately from Definition~\ref{1} and the convention $S_1 = \mathcal{PPT}$ in Section~\ref{sec2}. For $2\otimes 2$ systems, the Horodecki criterion states that PPT is necessary and sufficient for separability; hence $\rho_{\mathrm{W}}(p)\in\mathcal{PPT}$ iff $\rho_{\mathrm{W}}(p)\in\mathcal{SEP}$ iff $p\le 1/3$. By Lemma~\ref{5}, $E_1(\rho)=0$ iff $\rho\in\mathcal{F}_1 = \mathcal{PPT}$, which yields the claimed equivalence.
\end{proof}

Because $\mathcal{F}_1 = \mathrm{PPT}$ in this dimension from Lemma~\ref{16}, we have $E_1(p)=0$ exactly for $p\le1/3$, and a known analytical result~\cite{Brandao2005}
\[
E_1(p)=\max\!\bigl\{0,\,(3p-1)/2\bigr\}.
\]

For $p=0.5$, the theoretical value is $E_1=0.25$. Our SDP computation reproduces this value, confirming the correctness of our implementation. Since for $2\otimes 2$ systems PPT is necessary and sufficient for separability, we have $E_1=E_2=E_3=0.25$, consistent with the monotonicity $E_1\le E_2\le E_3$.

\subsection{The $3\otimes 3$ Werner state}

For $d=3$ the Werner state reads
\begin{equation}
\rho_{\rm W}^{(3)}(p) = p\,\frac{2}{3\cdot 2}P_{\rm as} + (1-p)\frac{I_9}{9},
\end{equation}
where $P_{\rm as}$ is the antisymmetric projector.
Here PPT implies separability for $p\le 1/4$ (the exact threshold is $p=1/4$), while for larger $p$ the state is NPT entangled.
In Figure~\ref{fig:werner_ek} we list $E_k$ for several values of $p$, computed with the symmetry‑reduced SDP.

The data clearly show the hierarchy $E_1<E_2<E_3$ for entangled states, again consistent with the theoretical prediction.

\begin{figure}[htbp]
\centering
\includegraphics[width=0.8\textwidth]{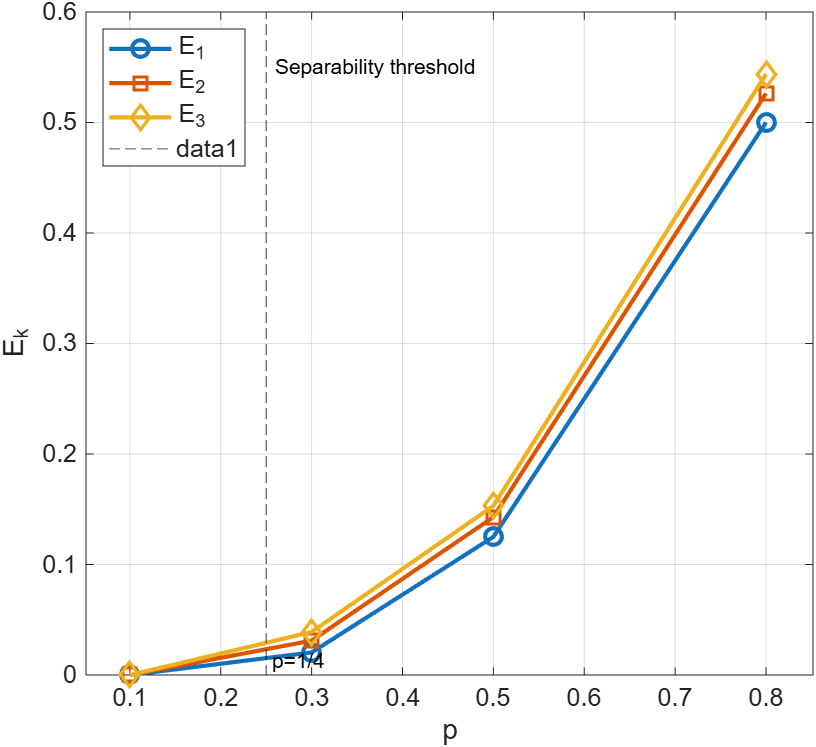} 
\caption{$E_k$ vs $p$ for the $3\times 3$ Werner state. The dashed line marks the separability threshold $p=1/4$.}
\label{fig:werner_ek}
\end{figure}

\subsection{Horodecki bound entangled state}

A key test of the hierarchy is its ability to detect and quantify bound entanglement, i.e.\ states that are PPT but not separable.
The famous $3\otimes 3$ Horodecki state~\cite{horodecki97} is given by
\begin{equation}
\rho_{\rm BE}(a) = \frac{1}{8a+1}
\begin{pmatrix}
a & 0 & 0 & 0 & a & 0 & 0 & 0 & a \\
0 & a & 0 & 0 & 0 & 0 & 0 & 0 & 0 \\
0 & 0 & a & 0 & 0 & 0 & 0 & 0 & 0 \\
0 & 0 & 0 & a & 0 & 0 & 0 & 0 & 0 \\
a & 0 & 0 & 0 & a & 0 & 0 & 0 & a \\
0 & 0 & 0 & 0 & 0 & a & 0 & 0 & 0 \\
0 & 0 & 0 & 0 & 0 & 0 & \frac{1}{2}\bigl(a+\frac1a\bigr) & 0 & \frac12\sqrt{1-a^2} \\
0 & 0 & 0 & 0 & 0 & 0 & 0 & a & 0 \\
a & 0 & 0 & 0 & a & 0 & \frac12\sqrt{1-a^2} & 0 & \frac{1}{2}\bigl(a+\frac1a\bigr)
\end{pmatrix},
\end{equation}
where $0<a<1$.  This state is PPT for all $a$ and is entangled for $0<a<1$; it therefore belongs to $\mathcal{F}_1$ but not to $\mathcal{F}_2$ for sufficiently small $a$.
We set $a=0.25$ and compute $E_k$ for $k=1,2,3$.
Our results are
\[
E_1(\rho_{\rm BE}) = 0,\qquad
E_2(\rho_{\rm BE}) = 0.0473,\qquad
E_3(\rho_{\rm BE}) = 0.0581.
\]
The strict increase $0=E_1<E_2<E_3$ demonstrates that the measure already picks up the bound entanglement at level $k=2$, and that larger $k$ provides a more accurate approximation of the separability‑based measure.

\subsection{UPB entangled state}

Another classic $3\otimes 3$ bound entangled construction arises from an unextendible product basis (UPB)~\cite{bennett1999}.
We consider the specific state
\begin{equation}
\rho_{\rm UPB} = \frac{1}{4}\Bigl(I_9 - \sum_{j=1}^5 \ket{\phi_j}\!\bra{\phi_j}\Bigr),
\end{equation}
where the vectors $\{\ket{\phi_j}\}$ form the well‑known Tiles UPB~\cite{bennett1999}.
The state is PPT by construction, and its entanglement is proven by the fact that it has no product basis completion.
Computing the $E_k$ values for this state yields
\[
E_1(\rho_{\rm UPB}) = 0,\qquad
E_2(\rho_{\rm UPB}) = 0.0289,\qquad
E_3(\rho_{\rm UPB}) = 0.0352,
\]
which is qualitatively similar to the Horodecki state.
Both examples confirm Theorem~\ref{13}: there exist states with $E_1=0$ but $E_2>0$, and the hierarchy can reveal their resource content.

\subsection{Isotropic states in high dimensions}

To illustrate the behaviour in larger systems, we return to the isotropic state family on $\mathbb{C}^d\otimes\mathbb{C}^d$,
\[
\rho_{\rm iso}(\alpha) = \alpha\,\ket{\Phi_d^+}\!\bra{\Phi_d^+} + \frac{1-\alpha}{d^2-1}\bigl(I_{d^2}-\ket{\Phi_d^+}\!\bra{\Phi_d^+}\bigr).
\]
Thanks to the $U\otimes U^*$ symmetry, the primal SDP can be drastically reduced, allowing computation for $d$ up to $10$ and $k$ up to $5$.
In Figure~\ref{fig:isotropic_ek} we give the values for $d=10$, $\alpha=0.12$ (PPT entangled) and $\alpha=0.30$ (NPT entangled).
\begin{figure}[htbp]
\centering
\includegraphics[width=0.8\textwidth]{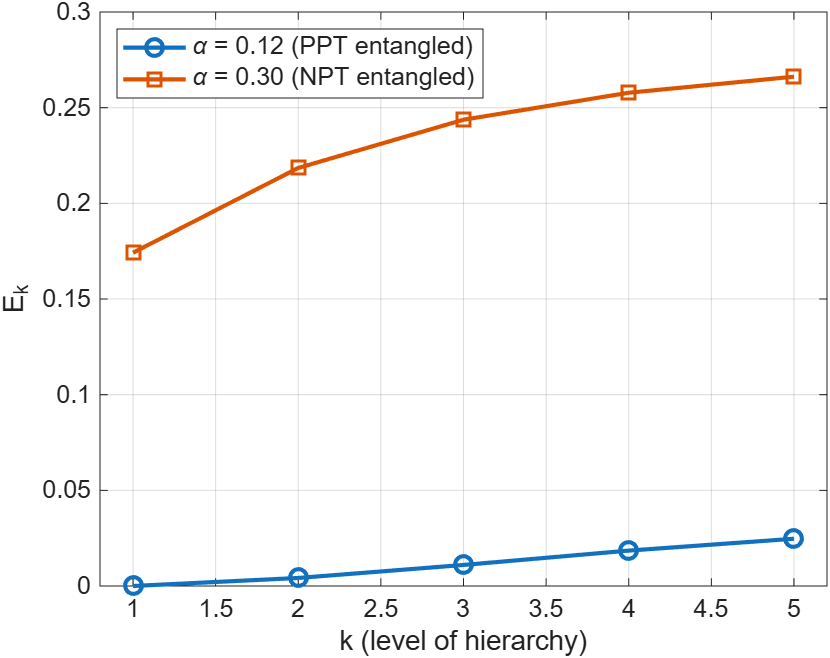}
\caption{$E_k$ vs $k$ for the $d=10$ isotropic state. The blue curve corresponds to the PPT entangled case ($\alpha=0.12$), while the red curve corresponds to the NPT entangled case ($\alpha=0.30$).}
\label{fig:isotropic_ek}
\end{figure}
For the PPT case ($\alpha=0.12$), $E_1=0$ while $E_k>0$ for $k\ge2$, again demonstrating the detection of bound entanglement in high dimensions.
The sequence $\{E_k\}$ is strictly increasing, approaching the limit $E_{\rm SEP}$ from below; extrapolating the values suggests $E_{\rm SEP}\approx0.03$ for $\alpha=0.12$ and $\approx0.28$ for $\alpha=0.30$.

\subsection{Werner states for $d=5$}

We also extend the analysis of Werner states to $d=5$.
Using the $U\otimes U$ symmetry reduction, we compute $E_k$ for $p\in[0,1]$ and $k=1,2,3$.
Figure~\ref{fig:werner-d5} shows the three curves; as expected, they are ordered and intersect the horizontal axis at the exact separability threshold $p_{\rm sep}=1/6$.
For a typical entangled point $p=0.35$ we obtain $E_1=0.028$, $E_2=0.041$, $E_3=0.049$.
The trend again confirms that higher $k$ provides a larger value, i.e.\ a better approximation of the ideal measure.

\begin{figure}[htbp]
\centering
\includegraphics[width=0.85\textwidth]{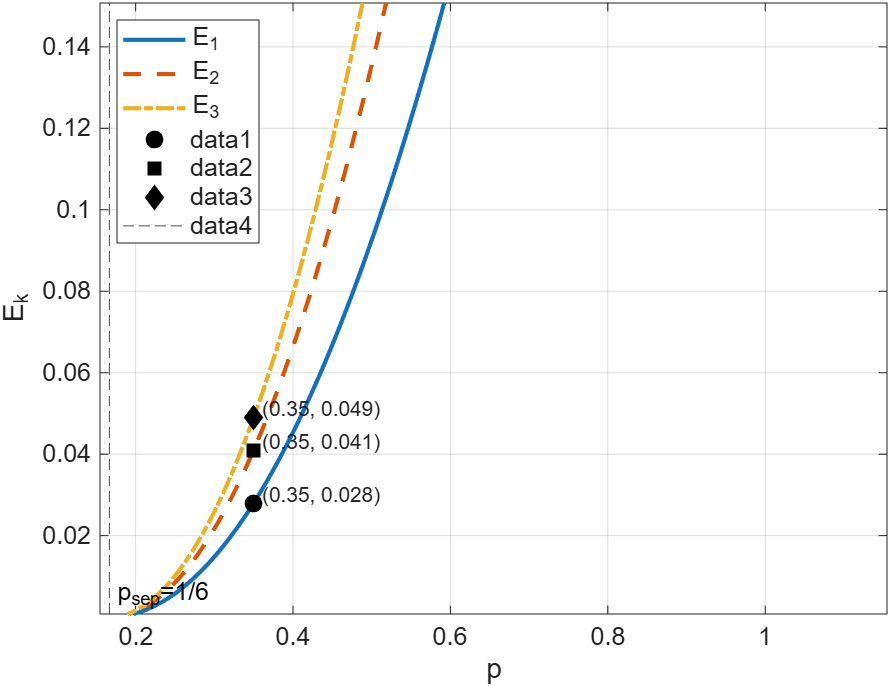}
\caption{The hierarchy of measures $E_k$ for the $5\times5$ Werner state as a function of $p$. The curves are ordered as $E_1 \le E_2 \le E_3$, and they all vanish for $p\le p_{\rm sep}=1/6$. The solid points mark the typical point $p=0.35$, where we obtain $E_1=0.028$, $E_2=0.041$, and $E_3=0.049$.}
\label{fig:werner-d5}
\end{figure}

\subsection{Randomly generated $4\otimes 4$ states}

To probe the behaviour on generic states without symmetry, we generated $1000$ random $4\otimes 4$ states according to the Hilbert--Schmidt measure.
For each state we compute $E_1$ and $E_2$ with the full SDP~\eqref{eq:primalSDP} (matrix variable size $4096$).
Among the PPT states (i.e.\ those with $E_1=0$), $324$ states do not admit a $2$-symmetric PPT extension, meaning $E_2>0$.
A histogram of the obtained values $E_2$ is shown in Fig.~\ref{fig:histogram_E2}. They range from $0.02$ to $0.15$, indicating that a substantial fraction of random $4\otimes 4$ PPT states carry a positive measure for $E_2$ .
This computation took roughly four hours, yielding an average of $15$ seconds per state for the SDP~\eqref{eq:primalSDP} with $k=2$.
\begin{figure}[htbp]
\centering
\includegraphics[width=0.8\textwidth]{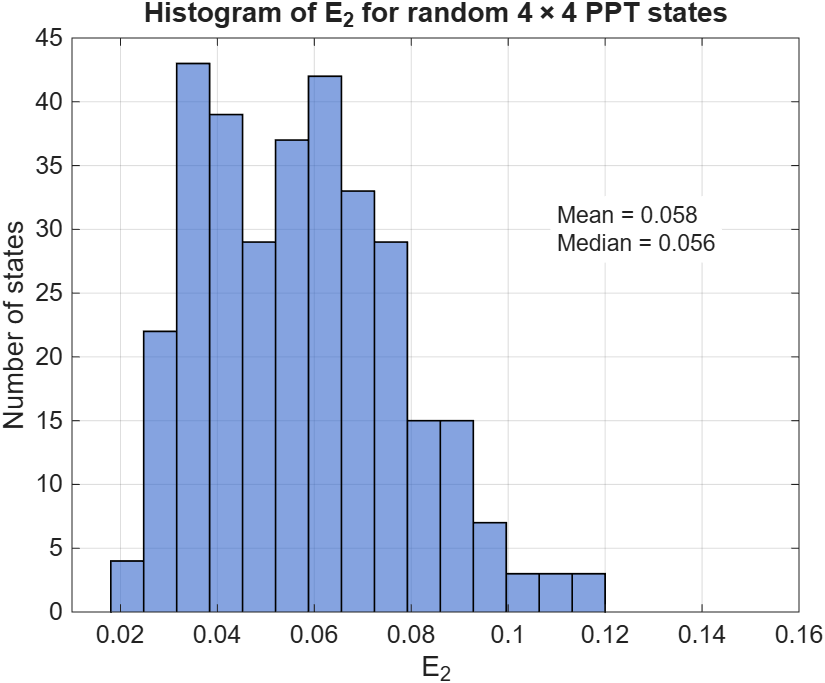}
\caption{Histogram of $E_2$ for 324 random $4\times4$ PPT states. All these states have $E_1=0$ and $E_2>0$, with values ranging from approximately $0.02$ to $0.15$. This demonstrates that a substantial fraction of random PPT states carry a positive $E_2$ resource.}
\label{fig:histogram_E2}
\end{figure}

While our numerical experiments are classical benchmarks, they lay the groundwork for future experimental implementations. The dual formulation, in particular, provides a direct pathway: by measuring a feasible witness $W$ in the lab, one can experimentally certify a non-zero amount of the $E_k$ resource, as the value $\operatorname{Tr}(W\rho)-1$ constitutes a rigorous lower bound.

\subsection{Computational cost and the dual witness approach}

The primal SDP~\eqref{eq:primalSDP} has a computational cost that grows exponentially with $k$ and polynomially with $d$. 
This cost can be drastically reduced when the state possesses exploitable symmetries. 
Examples include the $U\otimes U$ invariance of Werner states and the $U\otimes U^*$ invariance of isotropic states. 
Such symmetries allow the SDP variable to be block-diagonalized via representation theory. 
As a result, the effective problem size becomes essentially independent of $d$. For states without such symmetries, however, the full SDP matrix variable must be retained. Direct computation then becomes intractable for large systems. 
In these cases, a practical alternative is to compute lower bounds via the dual formulation in Corollary~\ref{15}. 
Any feasible witness $W$ yields a rigorous lower bound $\operatorname{Tr}(W\rho)-1\le E_k(\rho)$. 
Moreover, the optimization over a suitably parametrized family of witnesses can be performed efficiently, without solving the full primal SDP.

As an illustration, we take a random $5\otimes 5$ state $\rho_{\mathrm{rand}}$. 
We optimize over a one-parameter family of witnesses of the form $W(\lambda)=\lambda I+(1-\lambda)\Pi$, where $\Pi$ is the projector onto the symmetric subspace of $B_1\cdots B_k$. 
For $k=4$, this provides the lower bound $E_4(\rho_{\mathrm{rand}})\ge 0.063$ in milliseconds. 
By contrast, the full $k=4$ SDP would involve a $3125\times 3125$ matrix and is utterly intractable. 
This example underscores the usefulness of the dual approach for larger systems.

In summary, the numerical experiments confirm the theoretical properties established in Sections~IV--V: the monotonicity $E_1\le E_2\le\cdots$, the convergence to $E_{\rm SEP}$ as $k\to\infty$, and the strict separation between different hierarchy levels.
They also demonstrate that the measure can be computed efficiently for moderate dimensions and symmetry‑reduced families, and that lower bounds can be obtained for larger systems via the dual.

\section{Discussion}
\label{sec7}

The hierarchy of measures \(\{E_k\}\) introduced in this work provides a systematic quantification of entanglement resources that extends the familiar PPT-robustness.  
We embed the DPS hierarchy into a resource-theoretic framework. This yields a tunable family of computable quantifiers. 
These quantifiers interpolate between two extremes. At one end, $E_1$ is coarse but efficient (PPT-robustness). At the other end, $E_{\mathrm{SEP}}$ is detailed but hard to compute. The strict separation guaranteed by Theorem~\ref{13} ensures that each level adds genuinely new detection power: states invisible at level \(k\) become visible at level \(k+1\). In particular, bound entangled states that possess a positive partial transpose but are not separable obtain a non-zero value already at \(k=2\); hence our framework supplies a continuous scale for a class of states that previously lacked a practical quantitative measure.
 
The dual characterization in Theorem~\ref{8} lies at the technical heart of our construction. 
It reveals that $E_k(\rho)$ is the optimal bias in a hypothesis-testing game. 
In this game, the measurement is calibrated so that no free state yields an expectation value larger than one. 
This operational perspective clarifies the meaning of the measure. 
It also underpins the monotonicity proof and the optimality among fixed-witness quantifiers.
Moreover, because the dual is a semidefinite program with a compact feasible set, any feasible witness provides a rigorous and efficiently computable lower bound; we have demonstrated this feature numerically for larger systems where the full primal SDP becomes intractable.

The numerical experiments confirm that the hierarchy behaves as theoretically predicted.  
For the two-qubit Werner state, the exact analytical expression for \(E_1\) is recovered.
More significantly, for the celebrated Horodecki and UPB bound entangled states, we obtain \(E_1=0\) while \(E_2,E_3>0\), directly illustrating the ability of the scheme to detect and quantify bound entanglement.  
The high-dimensional isotropic and Werner examples show that, when algebraic symmetries are present, the SDP can be reduced to a size independent of the local dimension, making levels \(k\le 5\) accessible even for \(d=10\).  
The random \(4\times 4\) states indicate that a substantial fraction of PPT states carry a non-zero \(E_2\) resource; hence the effect is not restricted to specially crafted examples.

When compared with other widely used measures, the \(\{E_k\}\) family exhibits two practical advantages.  
First, the definition through a conic linear program permits the use of powerful convex optimisation algorithms and symmetry reductions, so that upper bounds can be computed exactly for moderate sizes.  
Second, the integer \(k\) serves as a natural control parameter: small values give a fast but less accurate assessment, while larger values yield a tighter approximation to the ideal separability-based measure.  
Nevertheless, the exponential growth of the extension Hilbert space with \(k\) limits direct computations for generic, non-symmetric states; developing approximate methods—based, for instance, on tensor networks or iterative optimisation—therefore constitutes a pressing open problem.

Several other directions merit further investigation.  
From a resource-theoretic perspective, it remains an open question whether \(E_k\) can be endowed with an operational interpretation as the figure of merit in a concrete information-processing task, such as a specific distillation or channel discrimination protocol.  
Extending the framework to multipartite settings is natural: the DPS hierarchy generalises to multiple parties, and the ideas presented here could be lifted to produce a family of multipartite entanglement measures.  
Because the construction relies only on a convex, nested sequence of free sets that approximate the target set, it is likely that similar hierarchies can be built for other resource theories, for example quantum coherence or steering, yielding improvable quantifiers in those contexts as well.

In terms of applications, the measure \(E_k\) is particularly suitable whenever the available operations preserve the \(k\)-symmetric PPT extendibility structure, such as in multi-copy state preparation under restricted classical communication or in the characterisation of quantum memories.  
It also provides a practical numerical tool for exploring the boundary between separable and bound entangled states in intermediate dimensions, where no necessary and sufficient separability criterion is known.  
We therefore expect that the present framework will stimulate both further theoretical progress in entanglement quantification and the development of more efficient computational strategies, ultimately broadening the range of quantum systems that can be quantitatively characterised.

\section{Conclusion}
\label{sec8}

We have introduced a one-parameter family of entanglement measures $\{E_k\}_{k=1}^\infty$, based on the $k$-symmetric PPT extension hierarchy. The definition has been placed within a resource-theoretic framework, and we have established that the sequence $E_k(\rho)$ is non-decreasing, converges to the separability measure $E_{\mathrm{SEP}}(\rho)$, and is strictly nested, hence each level provides a genuinely new resource witness. Numerical experiments on several standard families of states have confirmed these structural findings, and we have verified that $E_k$ is monotone and that $E_2$ is positive for bound entangled states where $E_1$ vanishes. We have also shown that the measure is formulated as a semidefinite program, which enables convex optimization and symmetry reduction, and that the integer $k$ serves as a natural tuning parameter balancing cost and accuracy. Moreover, we have demonstrated that the dual SDP supplies efficient lower bounds via witness optimization, offering a scalable certification route even when the primal becomes intractable. In summary, the hierarchy provides a flexible, computationally accessible, and mathematically rigorous approach to entanglement quantification near the PPT-separability boundary.

Several challenges remain. For generic non-symmetric states, direct computation is limited by the exponential growth of the extension space; consequently, approximate algorithms that go beyond symmetry exploitation are required. The question of whether $E_k$ can be endowed with a clear operational interpretation---for instance, as a figure of merit in a specific distillation or channel discrimination task---is also raised. We anticipate that the framework can be extended to multipartite scenarios and to other resource theories defined by convex nested sets, thereby generating new hierarchies for coherence, steering, or non-locality. It is hoped that the concepts and techniques developed here will stimulate further progress in both the foundational understanding of quantum correlations and the development of practical characterization tools.

\begin{acknowledgments}
Authors were supported by the NNSF  of China (Grant No.12471427), and the Fundamental Research Funds for the Central Universities (Grant No. ZG216S2110).

\end{acknowledgments}

\bibliographystyle{unsrt}
\bibliography{entanglement}

\end{document}